%% LaTeX Template for ISIT 2024
%%
%% by Stefan M. Moser, October 2017
%% (with some modifications by Tobias Koch, November 2023)
%% 
%% derived from bare_conf.tex, V1.4a, 2014/09/17, by Michael Shell
%% for use with IEEEtran.cls version 1.8b or later
%%
%% Support sites for IEEEtran.cls:
%%
%% http://www.michaelshell.org/tex/ieeetran/
%% http://moser-isi.ethz.ch/manuals.html#eqlatex
%% http://www.ctan.org/tex-archive/macros/latex/contrib/IEEEtran/
%%

\documentclass[conference,letterpaper]{IEEEtran}

%% depending on your installation, you may wish to adjust the top margin:
\addtolength{\topmargin}{9mm}

%%%%%%
%% Packages:
%% Some useful packages (and compatibility issues with the IEEE format)
%% are pointed out at the very end of this template source file (they are 
%% taken verbatim out of bare_conf.tex by Michael Shell).
%
% *** Do not adjust lengths that control margins, column widths, etc. ***
% *** Do not use packages that alter fonts (such as pslatex).         ***
%
\usepackage[utf8]{inputenc} 
\usepackage[T1]{fontenc}
\usepackage{url}
\usepackage{ifthen}
\usepackage{cite}
\usepackage[cmex10]{amsmath} % Use the [cmex10] option to ensure complicance
                             % with IEEE Xplore (see bare_conf.tex)

%% Please note that the amsthm package must not be loaded with
%% IEEEtran.cls because IEEEtran provides its own versions of
%% theorems. Also note that IEEEXplore does not accepts submissions
%% with hyperlinks, i.e., hyperref cannot be used.

\interdisplaylinepenalty=2500 % As explained in bare_conf.tex

\usepackage{dor_macros}
\newcommand{\infomat}{\rI^{X,Y}}
\newcommand{\infohat}{\hat{\rI}^{X,Y}}
\newcommand{\infohatGij}{\infohat_{G,i,j}}
\newcommand{\TE}{T^{X\to Y}}

%%%%%%
% correct bad hyphenation here
% ------------------------------------------------------------
\begin{document}
\title{InfoMat: A Tool for the Analysis and Visualization Sequential Information Transfer} 

% %%% Single author, or several authors with same affiliation:
% \author{%
%  \IEEEauthorblockN{Andrew R.~Barron}
%  \IEEEauthorblockA{Department of Statistics and Data Science\\
%                    Yale University\\
%                    New Haven, CT, USA\\
%                    Email: andrew.barron@yale.edu}
% }

%%% Several authors with up to three affiliations:
\author{%
  \IEEEauthorblockN{Dor Tsur}
  \IEEEauthorblockA{School of Electrical and Computer engineering\\
                    Ben-Gurion University of the Negev\\
                    Email: dortz@post.bgu.ac.il}
  \and
  \IEEEauthorblockN{Haim Permuter}
  \IEEEauthorblockA{School of Electrical and Computer engineering\\
                    Ben-Gurion University of the Negev\\
                    Email: haimp@bgu.ac.il}
}

%%% Many authors with many affiliations:
% \author{%
%   \IEEEauthorblockN{Andrew R.~Barron\IEEEauthorrefmark{1},
%                     Claude E.~Shannon\IEEEauthorrefmark{2},
%                     David Slepian\IEEEauthorrefmark{2},
%                     and Jacob Ziv\IEEEauthorrefmark{2}\IEEEauthorrefmark{3}}
%   \IEEEauthorblockA{\IEEEauthorrefmark{1}%
%                    Department of Statistics and Data Science, Yale University, New Haven, CT, USA,
%                     andrew.barron@yale.edu}
%   \IEEEauthorblockA{\IEEEauthorrefmark{2}%
%                     Bell Telephone Laboratories, Inc.,
%                     Murray Hill, NJ, USA,
%                     \{csh,dsl,jz\}@bell-labs.com}
%   \IEEEauthorblockA{\IEEEauthorrefmark{3}%
%                     Department of Electrical Engineering, Technion---Institute of Technology, Haifa, Israel,
%                     jz@ee.technion.ac.il}
% }

\maketitle

%%%%%%
%% Abstract: 
%% If your paper is eligible for the student paper award, please add
%% the comment "THIS PAPER IS ELIGIBLE FOR THE STUDENT PAPER
%% AWARD." as a first line in the abstract. 
%% For the final version of the accepted paper, please do not forget
%% to remove this comment!
%%

\begin{abstract}
   Despite the popularity of information measures in analysis of probabilistic systems, proper tools for their visualization are not common.
   This work develops a simple matrix representation of information transfer in sequential systems, termed information matrix (InfoMat).
   The simplicity of the InfoMat provides a new visual perspective on existing decomposition formulas of mutual information, and enables us to prove new relations between sequential information theoretic measures.
   We study various estimation schemes of the InfoMat, facilitating the visualization of information transfer in sequential datasets.
   By drawing a connection between visual pattern in the InfoMat and various dependence structures, we observe how information transfer evolves in the dataset.
   We then leverage this tool to visualize the effect of capacity-achieving coding schemes on the underlying exchange of information.
   We believe the InfoMat is applicable to any time-series task for a better understanding of the data at hand.
\end{abstract}

\section{Introduction}
Information theory is central to the analysis of stochastic systems.
Through the lens of information theory, one can study the evolution of temporal dependence in a sequential system, which is interpreted as the exchange of information.
For example, the interaction in a communication channel with feedback, that consists of an interacting encoder-channel pair, can be analysed through the directed information terms \cite{massey1990causality}, $I(X^m\to Y^m)$ and $I(Y^m\to X^m)$, which are given by
\begin{equation}\label{eq:di}
    I(X^m\to Y^m):=\sum_{i=1}^m I(X^i;Y_i|Y^{i-1}).
\end{equation}
These correspond to the information transfer over $m$ step from the encoder $(X^m)$ to the channel $(Y^m)$, and vice versa.
Thus, feedback capacity, defined as the maximization of normalized directed information \cite{tatikonda2008capacity}, represents optimization of information transfer from $X^m$ to $Y^m$.
Another pertinent example is neuroscience \cite{wibral2014directed}, in which information theory is widely used for the analysis of neurological data.
For neuroscience, the central measure of interest is transfer entropy \cite{schreiber2000measuring}.
For parameters $(m,k,l)$ the transfer entropy is given by
\begin{equation}\label{eq:te}
    T^{X\to Y}_m(k,l):= I(X^{m-1}_{m-k};Y_m|Y^{m-1}_{m-l}),
\end{equation}
and is interpreted as the effect of $k$ past $X$ symbols on the present symbol of $Y$, given $l$ elements of its past, calculated at time $m$.
Beyond communications and neuroscience, information theory was shown useful for various fields of sequential analysis, encompassing control \cite{klyubin2005empowerment,tanaka2017lqg,tiomkin2017unified,sabag2022reducing} ,causal inference \cite{raginsky2011directed,wieczorek2019information} and various machine learning tasks \cite{zhou2016causal,kalajdzievski2022transfer,bonetti2023causal}.

In contrast to its wide use to characterize and analyse information theoretic relations in time series, information theory fails to offer simple visualization tools to demonstrate its merits, and existing visualization techniques are not common. The cross-correlation matrix measures correlations between the components of two random vectors. However, its accuracy is limited to univariate Gaussian sequences with linear relations, and does not naturally generalize to conditional settings. The Venn Diagram \cite{CovThom06} also serves as a visualization of mutual information (MI), through decomposition into joint and conditional entropies. Furthermore the Information Diagram \cite{correa2013mutual}, which is based on the Taylor Diagram \cite{taylor2001summarizing}, explores the interplay between entropies and MI in a geometrical fashion. Beyond the visualization of information measures, information theory is widely used to evaluate other visualization and rendering  techniques \cite{chen2016information}.

We propose the information matrix (InfoMat), a novel visualization tool of the information transfer between two interacting sequences. 
The InfoMat arranges the involved conditional MI terms in an $m\times m$ matrix.
As a result, it provides a visual representation of existing information theoretic conservation laws and decompositions, while revealing new relationships.
The relationships are proved by characterizing different subsets of the matrix with corresponding information measures.
The InfoMat also serves as a visualization tool for arbitrary sequential data.
Using a heatmap representation, we can link various dependence structures in the data with visual patterns, and visualize the power of optimal coding schemes in channels with memory.
We propose a Gaussian approximation for the InfoMat that relies on the calculation of sample covariance matrices and analyze its theoretical guarantees, while empirically demonstrate the power of the resulting visualization tool.
When a Gaussian approximation is insufficient, we develop a neural InfoMat estimator, which is based on masked normalizing flows (MAFs), which expands the class of distributions captured by the InfoMat.

\section{The InfoMat}\label{sec:infomat}
We consider a pair of jointly distributed sequences $(X^m,Y^m)\sim P_{X^m,Y^m}$ such that $X_i=Y_i=\emptyset$ for $k\notin[1,\dots m]$. 
The InfoMat is an $m\times m$ matrix whose entries are given by conditional MI (CMI) terms, i.e., 
\begin{equation}
    \infomat\in\RR^{m\times m}_{\geq 0},\quad \infomat_{i,j} := I(X_i;Y_j|X^{i-1},Y^{j-1})
\end{equation}
The InfoMat captures the relations between the sequences elements, and its definition is motivated by the chain rule for MI \cite{CovThom06}, given by
$$
I(X^m;Y^m)=\sum_{i=1}^m\sum_{j=1}^m I(X_i;Y_j|X^{i-1},Y^{j-1})=\sum_{i,j}\infomat_{i,j},
$$
i.e., the MI between $X^m$ and $Y^m$ is given by the sum over all of the entries of $\infomat$.
We therefore interpret various decompositions of $I(X^m;Y^m)$ (termed information conservation laws \cite{massey2005conservation}) as grouping the entries of $\infomat$ into meaningful subsets, which sum to $I(X^m;Y^m)$.

\subsection{Capturing Information Conservation Laws}
Two particularly popular conservation laws were given by \cite{massey2005conservation} and \cite{amblard2011directed}.
These rules characterize the information transfer within a sequential setting, and are usually derived via algebraic manipulation of information measures. In some cases, these decompositions may be difficult to comprehend and visualize.
Using $\infomat$, we can visualize sequential measures. For example, directed information is given by.
\begin{equation}\label{eq:di_infomat}
\hspace{-0.18cm}I(X^m\to Y^m)=\mathbf{1}^\sT
    \begin{pmatrix}
{\color{BlueViolet}\infomat_{1,1}} & {\color{BlueViolet}\infomat_{1,2}} & {\color{BlueViolet}\dots} & {\color{BlueViolet}\infomat_{1,m}}\\
0 & {\color{BlueViolet}\infomat_{2,2}} & {\color{BlueViolet}\ddots}& {\color{BlueViolet}\vdots}\\
\vdots & \ddots &{\color{BlueViolet}\ddots} & {\color{BlueViolet}\infomat_{m-1,m}}\\
0& \dots & 0 & {\color{BlueViolet}\infomat_{m,m}}
\end{pmatrix} \mathbf{1},
\end{equation}
where $\mathbf{1}:=[1,\dots,1]\in\RR^m$.
Massey's conservation law decomposes MI into two 'opposite' directed information terms, with the purpose of dividing the information flow to and from the channel. It is given by \cite[Proposition~2]{massey2005conservation}
\begin{equation}\label{eq:conservation_massey_infomat}
    I(X^m;Y^m) = {\color{BlueViolet}I(X^{m}\to Y^m)} + {\color{Cerulean}I(\sD^1\circ Y^{m}\to X^m)}
\end{equation}
where $I(\sD^k\circ X^m\to Y^m):=\sum_{i=1}^m I(X^{i-k};Y_i|Y^{i-1})$ is the time-delayed directed information with $\sD^k\circ X^m$ being a left concatenation of $k$ null symbols with $X_{k+1}^{m}$.
Note that $I(X^{m}\to Y^m)$ corresponds to a triangular submatrix of $\infomat$, and the $\sD^k$ operation yields a triangular with side length $(m-k)$.
The InfoMat proposes a simple, alternative proof of \eqref{eq:conservation_massey_infomat} by coloring index subsets and summing over each color group.
\begin{equation}
\hspace{-0.18cm}I(X^m;Y^m)=\mathbf{1}^\sT
    \begin{pmatrix}
{\color{BlueViolet}\infomat_{1,1}} & {\color{BlueViolet}\infomat_{1,2}} & {\color{BlueViolet}\dots} & {\color{BlueViolet}\infomat_{1,m}}\\
{\color{Cerulean}\infomat_{2,1}} & {\color{BlueViolet}\infomat_{2,2}} & {\color{BlueViolet}\ddots}& {\color{BlueViolet}\vdots}\\
{\color{Cerulean}\vdots} &{\color{Cerulean}\ddots} &{\color{BlueViolet}\ddots} & {\color{BlueViolet}\infomat_{m-1,m}}\\
{\color{Cerulean}\infomat_{m,1}} &{\color{Cerulean}\dots} & {\color{Cerulean}\infomat_{m,m-1}} & {\color{BlueViolet}\infomat_{m,m}}
\end{pmatrix} \mathbf{1},
\end{equation}
The authors of \cite{amblard2011directed} propose a modification of \eqref{eq:conservation_massey_infomat} that distinguishes between past and present effect, given by
\begin{align}\label{eq:conservation_amblard_infomat}
    &\hspace{-0.12cm}I(X^m;Y^m)=\nonumber \\
    &\hspace{-0.12cm}{\color{Mahogany}I(\sD^1\circ X^m\to Y^m)} + {\color{Cerulean}I(\sD^1\circ Y^m\to X^m)} + {\color{ForestGreen}I_{\mathsf{inst}}(X^m,Y^m)},
\end{align}
where ${\color{ForestGreen}I_{\mathsf{inst}}(X^n,Y^n):=\sum_{i=1}^n I(X_i;Y_i|X^{i-1},Y^{i-1})}$ is the instantaneous information, which measures the \textit{symmetric} dependence between $X^n$ and $Y^n$, given a shared history.
Noting that $I_{\mathsf{inst}}(X^n,Y^n)=\mathsf{Trace}(\infomat)$, we can visualize \eqref{eq:conservation_amblard_infomat} as
\begin{equation}
    \hspace{-0.12cm}I(X^m;Y^m) = \mathbf{1}^\sT
    \begin{pmatrix}
{\color{ForestGreen}\infomat_{1,1}} & {\color{Mahogany}\infomat_{1,2}} & {\color{Mahogany}\dots} & {\color{Mahogany}\infomat_{1,m}}\\
{\color{Cerulean}\infomat_{2,1}} & {\color{ForestGreen}\infomat_{2,2}} & {\color{Mahogany}\ddots}& {\color{Mahogany}\vdots}\\
{\color{Cerulean}\vdots} &{\color{Cerulean}\ddots} &{\color{ForestGreen}\ddots} & {\color{Mahogany}\infomat_{m-1,m}}\\
{\color{Cerulean}\infomat_{m,1}} &{\color{Cerulean}\dots} & {\color{Cerulean}\infomat_{m,m-1}} & {\color{ForestGreen}\infomat_{m,m}}
\end{pmatrix}
\mathbf{1}
\end{equation}

\subsection{Developing New Information Theoretic Relations}
The simplicity of the InfoMat visualization allow us to develop new meaningful information theoretic equivalences.
The proofs draw upon the connections between information theoretic quantities and their corresponding indices subsets in $\infomat$ (e.g., \eqref{eq:di_infomat}).
We begin with the following proposition

\begin{proposition}[Transfer entropic decomposition of directed information]\label{prop:di_via_te}
    For $(X^m,Y^m)\sim P_{X^m,Y^m}$ and $1 \leq k \leq m$, we have
    $$
    I(\sD^k \circ X^m\to Y^m)=\sum_{i=1}^{m-k} \TE_{i+1}(i,i).
    $$
\end{proposition}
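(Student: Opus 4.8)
The plan is to prove the identity by the same principle of \emph{coloring an index subset and summing} that was used for Massey's law in \eqref{eq:conservation_massey_infomat}: I would rewrite each side as a sum of InfoMat entries $\infomat_{l,j}$ over an explicit index set, after which the proposition reduces to the purely combinatorial claim that the two index sets coincide. This exploits the dictionary between information quantities and index subsets of $\infomat$ already illustrated for directed information in \eqref{eq:di_infomat}.

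First I would expand the left-hand side. Starting from $I(\sD^k\circ X^m\to Y^m)=\sum_{i=1}^m I(X^{i-k};Y_i\mid Y^{i-1})$ and applying the chain rule for mutual information to the prefix $X^{i-k}$ while carrying $Y^{i-1}$ in the conditioning, each summand becomes $\sum_{l=1}^{i-k} I(X_l;Y_i\mid X^{l-1},Y^{i-1})=\sum_{l=1}^{i-k}\infomat_{l,i}$. The terms with $i\le k$ drop out because $X^{i-k}$ is then empty, so the left-hand side is exactly the sum of the InfoMat entries lying on or above the $k$-th superdiagonal, namely $\{\infomat_{l,i}:\ i-l\ge k\}$.

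Next I would expand the right-hand side term by term. Unfolding $\TE_{i+1}(i,i)$ from the definition \eqref{eq:te}, I would first simplify it to a conditional mutual information between a prefix of $X$ and the single symbol $Y_{i+1}$ given its $Y$-history, and then apply the identical chain rule to write it as a sum of InfoMat entries running down a single column. Summing over $i=1,\dots,m-k$ then produces a second explicit index set of the form $\{\infomat_{l,j}\}$.

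The crux — and the only step beyond bookkeeping — is to verify that the two index sets agree entry for entry. The delicate point is the delay: the operator $\sD^k$ offsets each column's contribution by $k$ rows, and I would have to track this offset against the shift $i\mapsto i+1$ coming from the subscript of $\TE_{i+1}$, while simultaneously checking that the degenerate terms (empty prefixes for small $i$, and the truncation at the upper limit $m-k$) neither leave an entry of the target region uncovered nor introduce a spurious one. Once the index sets are shown to coincide, applying $\mathbf{1}^\sT(\cdot)\,\mathbf{1}$ to each collection yields the stated equality. I expect getting this shift/offset alignment exactly right to be the main obstacle.
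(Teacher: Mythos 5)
Your strategy is the paper's own: reduce both sides to sums of $\infomat$ entries via the chain rule (the left side to the region on or above the $k$-th superdiagonal, the right side to a union of columns of $\infomat$) and check that the two index sets coincide. For $k=1$ this closes exactly as you describe, and matches the paper's coloring argument; your identification of the left-hand side with $\{\infomat_{l,j}:\ j-l\ge k\}$ is correct.

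The gap is precisely the step you defer as ``the main obstacle'': for $k\ge 2$ the two index sets do \emph{not} coincide, so the matching you plan to verify fails. Unfolding the right-hand side gives $\TE_{i+1}(i,i)=I(X^i;Y_{i+1}\mid Y^i)=\sum_{l=1}^{i}\infomat_{l,i+1}$, so $\sum_{i=1}^{m-k}\TE_{i+1}(i,i)$ covers the strict upper triangle of columns $2,\dots,m-k+1$, namely $\{(l,j):\ 2\le j\le m-k+1,\ l\le j-1\}$, whereas your left-hand set $\{(l,j):\ l\le j-k\}$ lives in columns $k+1,\dots,m$. Already for $m=3$, $k=2$ the left side is $\infomat_{1,3}=I(X_1;Y_3\mid Y^2)$ while the right side is $\infomat_{1,2}=I(X_1;Y_2\mid Y_1)$, and these differ in general (take $Y_2=X_1+N$ and $Y_3$ independent of everything else: the left side is $0$, the right side is positive). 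The correct column decomposition of the left side is $I(\sD^k\circ X^m\to Y^m)=\sum_{i=1}^{m-k}I(X^i;Y_{i+k}\mid Y^{i+k-1})$ (columns of length $i$ starting at column $i+k$), and these summands are of the form $\TE_{n}(a,a)$ only when $k=1$, because the definition \eqref{eq:te} forces the conditioning history of $Y$ and the target symbol of $Y$ to be adjacent. So either the claim must be restricted to $k=1$ (the only case the paper itself actually proves; its ``similarly extends to $k>1$'' inherits the same defect) or the right-hand side must be rewritten with the shifted conditioning. As planned, your proof cannot complete the general-$k$ case.
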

\begin{proof}
We provide the proof for $k=1$.
    Recall that directed information corresponds to the upper triangular part of $\infomat$.
    We note that $\TE_{i+1}(i,i)$ corresponds to a column in $\infomat$ that begins in the $(i+1)$th column and has length $i$.
Thus, we color the triangular (directed information) as follows
$$
\begin{pmatrix}
{\color{black}\hspace{0.5cm}} & {\color{blue}\infomat_{1,2}} & {\color{brown}\infomat_{1,3}}  & {\color{BlueViolet}\dots} & {\color{orange}\infomat_{1,m}}\\
{\color{Cerulean}} & {\color{ForestGreen}} & {\color{brown}\infomat_{2,3}}  & {\color{BlueViolet}\ddots}& {\color{orange}\vdots}\\
{\color{Cerulean}} & {\color{Cerulean}} & {\color{brown}}  & {\color{BlueViolet}\ddots}& {\color{orange}\vdots}\\
% {\color{Cerulean}\infomat_{4,1}} & {\color{Cerulean}\infomat_{4,2}} & {\color{Cerulean}\infomat_{4,3}}  & {\color{BlueViolet}\ddots}& {\color{BlueViolet}\vdots}\\
{\color{Cerulean}} &{\color{Cerulean}} &{\color{Cerulean}} &{\color{BlueViolet}} & {\color{orange}\infomat_{m-1,m}}\\
{\color{Cerulean}} &{\color{Cerulean}} &{\color{Cerulean}} & {\color{Cerulean}} & {\color{orange}}
\end{pmatrix}
$$
The relation then follows by summing over the upper triangular and dividing the sum into the corresponding rows, i.e.
\begin{align*}
    &I(\sD^1\circ X^m \to Y^m) \\
    &\hspace{-0.15cm}= {\color{blue}T_{2}^{X\to Y}(1,1)} + {\color{brown}T_{3}^{X\to Y}(2,2)} +\dots + {\color{orange}T_{m}^{X\to Y}(m-1,m-1)} 
\end{align*}
The proof similarly extends to $k>1$ with similar steps.
\end{proof}
\noindent In a similar fashion, we draw the following connections
% Having drawn a connection between sequential information theoretic quantities and their patterns in $\infomat$ \dt{(which are summarized in Table \ref{tab:infomat_mapping})}, we can reveal additional relations through the use of the InfoMat.
% We propose the following 
\begin{proposition}[Conservation of transfer entropy]\label{prop:te_conservation}
    \begin{equation*}\label{eq:te_conservation}
    I(X^m;Y^m) = \sum_{i=1}^{m-1} T^{X\to Y}_{i+1}(i,i) + T^{Y\to X}_{i+1}(i,i) + I_{\mathsf{inst}}(X^m,Y^m).
\end{equation*} 
\end{proposition}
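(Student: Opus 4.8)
The plan is to read the identity directly off the InfoMat partition, reusing the two ingredients already established: the colored decomposition of $\infomat$ underlying Amblard's law \eqref{eq:conservation_amblard_infomat}, and the transfer-entropic expansion of each triangular block from Proposition~\ref{prop:di_via_te}. The starting point is the chain rule $I(X^m;Y^m)=\sum_{i,j}\infomat_{i,j}$, which lets me split the sum over all entries of $\infomat$ into three disjoint index groups: the diagonal $\{i=j\}$, the strict upper triangle $\{i<j\}$, and the strict lower triangle $\{i>j\}$. By definition the diagonal sums to $I_{\mathsf{inst}}(X^m,Y^m)=\mathsf{Trace}(\infomat)$, so this term already matches the target and requires no further work.

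It then remains to expand the two off-diagonal triangles. The strict upper triangle is exactly $I(\sD^1\circ X^m\to Y^m)$, and Proposition~\ref{prop:di_via_te} with $k=1$ rewrites it as $\sum_{i=1}^{m-1}\TE_{i+1}(i,i)$. For the strict lower triangle I would invoke the symmetry of the InfoMat under transposition: interchanging the roles of $X$ and $Y$ maps $\infomat_{i,j}\mapsto\infomat_{j,i}$ and swaps the two strict triangles, so the strict lower triangle equals $I(\sD^1\circ Y^m\to X^m)$, and the same proposition applied in the reverse direction gives $\sum_{i=1}^{m-1}T^{Y\to X}_{i+1}(i,i)$. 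Substituting both expansions into the partition and collecting the two sums under a single index yields the claimed identity.

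The argument carries no real obstacle, since every step is either the chain rule, the definition of $I_{\mathsf{inst}}$, or a direct citation of Proposition~\ref{prop:di_via_te}. The only point meriting a line of justification is the transposition symmetry: I would verify that Proposition~\ref{prop:di_via_te}, whose proof colors the strict upper triangle column by column, applies verbatim to the strict lower triangle once $X$ and $Y$ are swapped, so that $T^{Y\to X}_{i+1}(i,i)=I(Y^i;X_{i+1}|X^i)$ corresponds to the strict lower part of the appropriate column of the transposed matrix. With this symmetry in hand, the three blocks reassemble into $I(X^m;Y^m)$ and the proof is complete.
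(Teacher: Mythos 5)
Your proposal is correct and follows essentially the same route as the paper: both partition $\infomat$ into the main diagonal (giving $I_{\mathsf{inst}}$), the strict upper triangle read off column by column as the terms $T^{X\to Y}_{i+1}(i,i)$ via Proposition~\ref{prop:di_via_te}, and the strict lower triangle read off row by row as $T^{Y\to X}_{i+1}(i,i)$. The only cosmetic difference is that you justify the lower triangle by the transposition symmetry $\rI^{Y,X}=(\infomat)^\sT$ while the paper colors the sub-rows directly; the two are equivalent.
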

\begin{proposition}[Directed information chain rule]\label{prop:di_chain_rule}
    \begin{align*}
    &I(\sD^k\circ X^m\to Y^m)=\\
    &\hspace{1.5cm}I(\sD^{k+1}\circ X^m\to Y^m) + \sum_{i=1}^m I(X_{i-k};Y_{i}|Y^{i-1}).
    \end{align*}
\end{proposition}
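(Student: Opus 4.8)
The plan is to mirror the coloring argument of Proposition~\ref{prop:di_via_te} and read both delayed directed information terms as sums over shifted triangular regions of $\infomat$, obtaining the claimed identity as the difference of two such regions. Expanding each conditional MI in the definition via the chain rule, exactly as in \eqref{eq:di_infomat}, gives
$$
I(\sD^k\circ X^m\to Y^m)=\sum_{i=1}^m I(X^{i-k};Y_i\,|\,Y^{i-1})=\sum_{i=1}^m\sum_{a=1}^{i-k}\infomat_{a,i},
$$
so that $I(\sD^k\circ X^m\to Y^m)$ is precisely the sum of the entries $\infomat_{a,i}$ with $i-a\ge k$, i.e.\ the block lying on and beyond the $k$-th superdiagonal. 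Applying the same reading to $\sD^{k+1}$ collects exactly the entries with $i-a\ge k+1$.

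First I would write both quantities in this superdiagonal form. Subtracting them, every entry with $i-a\ge k+1$ cancels, and what remains is the $k$-th superdiagonal $\sum_{a}\infomat_{a,a+k}$. Reindexing by $i=a+k$, and using the null-symbol convention $X_j=\emptyset$ for $j\le 0$ to discard the degenerate terms with $i\le k$, turns this into $\sum_{i=1}^m \infomat_{i-k,i}$, which is the summand on the right-hand side. As an independent check I would argue term by term: since $X^{i-k}=(X^{i-k-1},X_{i-k})$, the chain rule for mutual information yields
$$
I(X^{i-k};Y_i\,|\,Y^{i-1})-I(X^{i-k-1};Y_i\,|\,Y^{i-1})=I(X_{i-k};Y_i\,|\,X^{i-k-1},Y^{i-1}),
$$
and summing over $i$ reproduces the result, with the right-hand side being exactly the InfoMat entry $\infomat_{i-k,i}$.

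The main thing to get right, and the only genuine obstacle, is the conditioning in the isolated term. The chain-rule step naturally produces $I(X_{i-k};Y_i\,|\,X^{i-k-1},Y^{i-1})$, carrying the extra conditioning on the past $X$-symbols $X^{i-k-1}$ inherited from the nested sets $X^{i-k}\supset X^{i-k-1}$; care is needed to track this conditioning correctly through the subtraction so that each summand is identified with $\infomat_{i-k,i}$. The remaining bookkeeping, namely aligning the index ranges of the two triangular sums via the null-symbol convention at the boundary, is routine.
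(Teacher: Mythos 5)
Your proof is correct and takes essentially the same route as the paper: both read $I(\sD^k\circ X^m\to Y^m)$ as the sum of the InfoMat entries on and above the $k$-th superdiagonal of $\infomat$, so that subtracting the $\sD^{k+1}$ region leaves exactly that superdiagonal, $\sum_i \infomat_{i-k,i}$. Your caution about the conditioning is well placed and worth making explicit: the difference term is $\sum_i I(X_{i-k};Y_i\mid X^{i-k-1},Y^{i-1})$, consistent with the paper's own $k=0$ instance where the diagonal sums to $I_{\mathsf{inst}}(X^m,Y^m)$ (conditioned on both pasts), so the proposition as printed, which conditions only on $Y^{i-1}$, appears to have dropped the $X^{i-k-1}$ conditioning and your corrected form is the right one.
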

\noindent The proof of Propositions \ref{prop:te_conservation} and \ref{prop:di_chain_rule} follow by arguments and tools similar to Proposition \ref{prop:di_via_te}.
% applying Proposition \ref{prop:di_via_te} to \eqref{eq:conservation_amblard_infomat}, while Proposition \ref{prop:di_chain_rule} directly follows from Proposition \ref{prop:di_via_te} with similar coloring arguments.
The visual patters and corresponding information theoretic relations are summarized in Table \ref{tab:infomat_mapping}.

% Next, we propose a simple approximation for the InfoMat that is based on differential entropies of multivaraite Gaussian distributions.

\begin{table}[t]
    \centering
    \setlength{\extrarowheight}{1pt}
    \begin{tabular}{||c|c||}
    \hline
    Information measure & Visual pattern in $\infomat$\\
    \hline
       $I(X^m \to Y^m)$  & Upper triangular \textbf{with diagonal} \\
       \hline
        $I(\sD^k\circ X^m \to Y^m)$ & Upper triangular, side $(m-k)$ \\
        \hline
        $\TE_{i+1}(i,i)$ & Col. in row $i+1$ with length $i$ \\
        \hline
    \end{tabular}
    \caption{Visual shapes of dependence patterns in $\infomat$}
    \label{tab:infomat_mapping}
    \vspace{-0.5cm}
\end{table}

\section{Gaussian Approximation}\label{sec:gaussian_approx}
In this section we propose an approximation of $\infomat$ from samples when $P_{X^m,Y^m}$ is unknown, or its entries are not given in a closed form.
As we are required to estimate $m^2$ CMI terms, we propose a Gaussian approximation of CMI and study its performance. In a later section, we propose a neural estimator of $\infomat$ for a case where a Gaussian approximation is insufficient.

\subsection{Proposed Estimator}
Let $(X^n,Y^n)\sim P_{X^n,Y^n}$ be a given dataset from which we want to estimate $\infomat$.
For simplicity of presentation, we assume all variables have zero mean.
We begin by using the following representation of CMI
\begin{lemma}\label{lemma:gaussian_form}
Let $(X^m,Y^m)\sim P_{X^m,Y^m}$, and $i,j
\leq m$. Then,
    \begin{align}
        I(X_i;Y_j|X^{i-1},Y^{j-1})&=H(X^{i},Y^{j-1})+H(X^{i-1},Y^{j})\nonumber \\
        &\hspace{-0.025cm}- H(X^{i-1},Y^{j-1}) - H(X^{i},Y^{j}).\label{eq:cmi_entropies}  
    \end{align}
    If $(X^m,Y^m)$ are jointly Gaussian, then
    \begin{equation}\label{eq:cmi_entropies_gaussian}
        \hspace{-0.15cm}I(X_i;Y_j|X^{i-1},Y^{j-1})=\frac{1}{2}\log\frac{\left|\rK_{X^{i},Y^{j-1}}\right|\left|\rK_{X^{i-1},Y^{j}}\right|}{\left|\rK_{X^{i-1},Y^{j-1}}\right|\left|\rK_{X^{i},Y^{j}}\right|},
    \end{equation}
    where $\rK_Z$ is the covariance matrix of $Z\sim P_Z$, and $|\rK_Z|$ is its determinant. 
\end{lemma}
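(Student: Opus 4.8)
The plan is to establish the two displayed identities in turn: the first by a purely combinatorial unfolding of conditional mutual information into joint entropies, and the second by substituting the Gaussian entropy formula and checking that the normalizing constants cancel.

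For \eqref{eq:cmi_entropies}, I would start from the definition of conditional mutual information as a difference of conditional entropies,
$$
I(X_i;Y_j|X^{i-1},Y^{j-1}) = H(X_i|X^{i-1},Y^{j-1}) - H(X_i|X^{i-1},Y^{j-1},Y_j),
$$
and then rewrite each conditional entropy via the chain rule $H(A|B)=H(A,B)-H(B)$. The first term becomes $H(X^{i},Y^{j-1})-H(X^{i-1},Y^{j-1})$ after merging $(X_i,X^{i-1})=X^{i}$, and the second becomes $H(X^{i},Y^{j})-H(X^{i-1},Y^{j})$ after merging $(Y^{j-1},Y_j)=Y^{j}$. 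Collecting the four joint-entropy terms and tracking signs yields exactly \eqref{eq:cmi_entropies}. This step is routine; the only thing to watch is the bookkeeping of which variable is appended to which prefix, and note that it holds verbatim in the discrete or differential setting.

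For \eqref{eq:cmi_entropies_gaussian}, I would first observe that a subvector of a jointly Gaussian vector is again jointly Gaussian, so each of the four entropies in \eqref{eq:cmi_entropies} is a Gaussian differential entropy and equals $\tfrac12\log\!\big((2\pi e)^{d}|\rK_Z|\big)$, where $d=\dim Z$ and we assume the relevant covariances are non-singular. Substituting this into \eqref{eq:cmi_entropies} splits each term into a dimension-dependent constant plus a log-determinant. The crux is a dimension count: the positively signed arguments $(X^{i},Y^{j-1})$ and $(X^{i-1},Y^{j})$ have dimension $i+j-1$ each, while the negatively signed arguments $(X^{i-1},Y^{j-1})$ and $(X^{i},Y^{j})$ have dimensions $i+j-2$ and $i+j$; both groups sum to $2(i+j)-2$, so the $(2\pi e)$ factors cancel identically. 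What remains is precisely the claimed ratio of determinants.

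The main obstacle, such as it is, lies in the sign-and-index bookkeeping of the first part together with the observation that the dimensions of the two entropy groups balance so the normalizing constants vanish; both are elementary, so I expect no genuine difficulty beyond careful accounting.
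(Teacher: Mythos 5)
Your proof is correct and follows essentially the same route as the paper's: unfold the conditional mutual information into conditional entropies, convert each to joint entropies via the chain rule to obtain the four-term identity \eqref{eq:cmi_entropies}, and then substitute the Gaussian differential entropy formula, with the $(2\pi e)$ factors cancelling because the dimensions of the positively and negatively signed groups both sum to $2(i+j)-2$. The only cosmetic difference is that you start from the two-term expansion $H(X_i|X^{i-1},Y^{j-1})-H(X_i|X^{i-1},Y^{j})$ whereas the paper uses the three-term expansion $H(X_i|\cdot)+H(Y_j|\cdot)-H(X_i,Y_j|\cdot)$; your dimension-counting argument for the cancellation of the normalizing constants is in fact spelled out more explicitly than in the paper.
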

In what follows, we denote the right-hand side of \eqref{eq:cmi_entropies_gaussian} with $\infomat_{G,i,j}$, which is a Gaussian approximation of $\infomat_{i,j}$.

Estimating $\infomat_{G,i,j}$ is significantly simpler than estimating $\infomat_{i,j}$, as the estimation problem boils down to the calculation of sample covariance matrices, which is a well studied problem.
An estimator of $\infomat_{i,j}$ from $(x^n,y^n)$, denoted $\infohatGij(x^n,y^n)$ is given by calculation of the Gaussian approximation \eqref{eq:cmi_entropies_gaussian} using sample covariance matrices, e.g. $\hat{\rK}_{i,j}:=\hat{\rK}_{X^i,Y^j}$.
To estimation $\hat{\rK}_{i,j}$ we divide $(x^n,y^n)$ into a dataset $(x^i,y^j)_{l=1}^N$ with $n/m \leq N \leq n$.
The Gaussian approximation procedure is summarized in Algorithm \ref{alg:infomat_est}.
Finally, a Gaussian estimator of $\infomat$ is an $m\times m$ matrix whose $(i,j)$ entry is given by $\infohatGij$.
%%%%
%%%%
\begin{algorithm}[ht]
\caption{Gaussian InfoMat Estimation}
\label{alg:infomat_est}
\textbf{input:} Data $(x^n,y^n)$, matrix length $m$\\
\textbf{output:} 
Gaussian estimate of $\infomat$
\algrule
\begin{algorithmic}
\State Initialize $\infohatGij=0$ for $(i,j)\in(1,\dots,m)\times(1,\dots,m)$
\For{$(i,j)$ in $(1,\dots,m)\times(1,\dots,m)$}
    \State Divide $(X^n,Y^n$) into datasets 
    $$
    ((x^{i-1},y^{j-1})_l,(x^{i},y^{j-1})_l,(x^{i-1},y^{j})_l,(x^{i},y^{j})_l)_{l=1}^N
    $$
    \State Calculate sample covariance matrices 
    \State Calculate $\infohatGij$ via \eqref{eq:cmi_entropies_gaussian}.
\EndFor
\Return Estimated InfoMat.
\end{algorithmic}
\end{algorithm}
%%%%
%%%%
% To summarize, estimating $\infohatGij$ from $(x^n,y^n)$ involves
% \begin{enumerate}
%     \item Obtaining datasets 
%     $$((x^{i-1},y^{j-1})_l,(x^{i},y^{j-1})_l,(x^{i-1},y^{j})_l,(x^{i},y^{j})_l)_{l=1}^N
%     $$
%     \item Plugging into Gaussian approximation formula \eqref{eq:cmi_entropies_gaussian} \dt{consider writing in alg form instead} 
% \end{enumerate}

The proposed Gaussian approximation for $\infomat$ relies on the estimation of covariance matrices.
Therefore, it inherits its guarantees from those of log determinants of sample covariance matrices. We thus have the following
\begin{proposition}[Gaussian approximation performance guarantees]\label{prop:gaussian_estiamtor}
    Let $(X^{N\times m},Y^{N\times m})$ be a set of $N$ length $m$ sequences of jointly $(d_x,d_y)$-dimensional Gaussian random vectors.
    Then
    \begin{enumerate}
        \item Bias: $\quad\lim_{n\to\infty}\EE\left[\infohatGij\right] = \rI_{G,i,j}.$
        \item Variance: $\quad\lim_{n\to\infty}\mathsf{Var}(\infohatGij) = O(\frac{dm}{N}) = O(\frac{1}{N}).$
    \end{enumerate}
    where $d=\max(d_x,d_y)$.
\end{proposition}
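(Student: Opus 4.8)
The plan is to reduce both claims to the well-understood statistics of the log-determinant of a sample covariance matrix. By \eqref{eq:cmi_entropies_gaussian}, the estimator is the fixed linear combination
$$
\infohatGij = \tfrac{1}{2}\left[\log|\hat{\rK}_{i,j-1}| + \log|\hat{\rK}_{i-1,j}| - \log|\hat{\rK}_{i-1,j-1}| - \log|\hat{\rK}_{i,j}|\right],
$$
where each $\hat{\rK}_{a,b}$ is the sample covariance of the zero-mean Gaussian vector $(X^a,Y^b)$ of dimension $p_{a,b} = a d_x + b d_y$, all formed from the same $N$ realizations. Writing $\rI_{G,i,j}$ as the identical combination of the four \emph{true} log-determinants, both statements follow once I control the fluctuation $\log|\hat{\rK}_{a,b}| - \log|\rK_{a,b}|$ of each term.

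First I would invoke the Bartlett decomposition: since the data is zero-mean, the scaled estimate $N\hat{\rK}_{a,b}$ is Wishart $W_{p}(N,\rK_{a,b})$ with $p = p_{a,b}$, so
$$
\log|\hat{\rK}_{a,b}| - \log|\rK_{a,b}| \;\stackrel{d}{=}\; \sum_{l=1}^{p}\log\chi^2_{N-l+1} - p\log N,
$$
a sum of independent log-chi-squared variables. This requires $N \geq p$ for the determinant to be a.s. positive, which holds eventually as we let $N\to\infty$ with $m$ and $d$ fixed.

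For the bias, I would use $\EE[\log\chi^2_k] = \psi(k/2) + \log 2 = \log k + O(1/k)$, where $\psi$ is the digamma function, which gives $\EE[\log|\hat{\rK}_{a,b}|] - \log|\rK_{a,b}| = O(p^2/N) \to 0$. Because expectation is linear and $\infohatGij$ is a finite combination of only four such terms, the dependence between the matrices is irrelevant here, and $\EE[\infohatGij] \to \rI_{G,i,j}$, proving Claim~1.

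For the variance, the same decomposition with $\mathsf{Var}[\log\chi^2_k] = \psi'(k/2) = O(1/k)$, where $\psi'$ is the trigamma function, yields $\mathsf{Var}[\log|\hat{\rK}_{a,b}|] = \sum_{l=1}^p \psi'((N-l+1)/2) = O(p/N)$. The main obstacle is that the four covariance estimates are built from the same data and are therefore strongly dependent, so their variances cannot simply be added. I would sidestep computing the exact covariance structure via the Cauchy--Schwarz bound $\mathsf{Var}(\sum_k c_k Z_k) \leq \big(\sum_k |c_k|\sqrt{\mathsf{Var}(Z_k)}\big)^2$, which depends only on the marginal variances. Since every involved matrix has dimension $p_{a,b} \leq m(d_x+d_y) = O(md)$, this gives $\mathsf{Var}(\infohatGij) = O(md/N)$, which is $O(1/N)$ for fixed $m$ and $d$, proving Claim~2.
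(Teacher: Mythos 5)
Your proof is correct and takes essentially the same route as the paper: both reduce the claim to the bias and variance of Wishart log-determinants controlled by digamma/trigamma asymptotics --- the paper cites the law of the log-determinant from Cai et al. and Lemma 6 of Duong--Nguyen for exactly the facts you derive directly from the Bartlett decomposition. The one place you improve on the paper's argument is the variance step: your Cauchy--Schwarz bound correctly accounts for the dependence among the four sample covariance matrices built from the same data, whereas the paper bounds the variance of their (dependent) sum by the plain sum of variances, which is only valid up to a constant factor and does not affect the stated $O(dm/N)$ rate.
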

\noindent The proof follows from the analysis in \cite{cai2015law} and \cite{duong2023diffeomorphic}.

\section{Beyond Gaussian - Neural Estimation}
When the Gaussian approximation is insufficient (due to e.g. high non-linearities or non-Gaussianity), we propose a neural estimation scheme for $\infomat$, which relies on a novel CMI estimation procedure.
Due to space constraints, we provide a high-level description of the estimator and its application to the InfoMat, reserving full details for the extended version.

\begin{figure}[b!]
    \centering
    \psfrag{A}[][][0.9]{\hspace{0.15cm}$(x^i,y^j)$}
    \psfrag{B}[][][1.2]{MAF}
    \psfrag{C}[][][1]{\vspace{0.2cm}$f_\theta^{-1}$}
    \psfrag{D}[][][0.9]{\hspace{0.1cm}$(x'_i,y'_j)$}
    \psfrag{E}[][][1.2]{$\Phi^{-1}$}
    \psfrag{F}[][][1.2]{$\hat{\rK}_{i,j}$}
    \psfrag{G}[][][1.2]{GMI}
    \psfrag{H}[][][1]{\hspace{0.15cm}$\hat{\infomat}_{i,j}$}
    \psfrag{I}[][][1]{ML loss, $\cL_{\theta}$}
    \psfrag{J}[][][1]{\hspace{0.2cm}$\nabla_\theta \cL_\theta$}
    \psfrag{K}[][][1]{}
    \includegraphics[trim={0pt 0pt 0pt 0pt}, clip,width=0.97\linewidth]{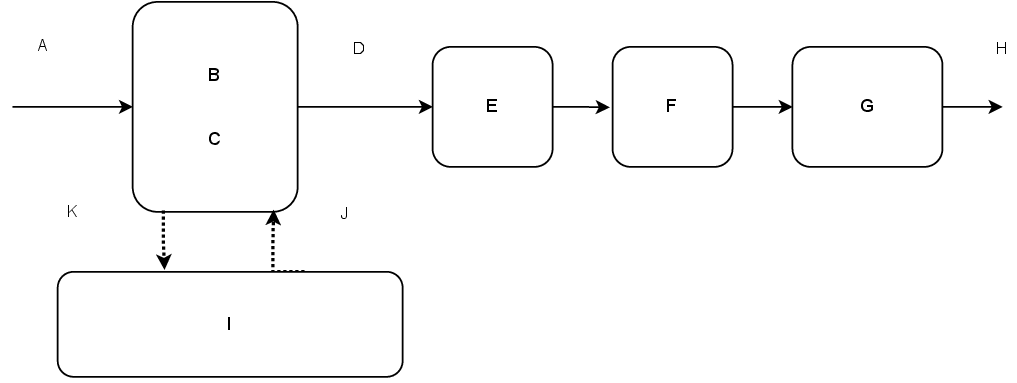}
    \caption{Neural estimation model. Dashed line represents maximum-likelihood (ML) training phase, while the filled lines account for the inference (MI calculation) phase.}
    \label{fig:ne_model}
\end{figure}

% % \subsection{Diffeomorphic Mutual Information Estimator}
We employ the Diffeomorphic MI (DMI) estimator \cite{duong2023diffeomorphic}, which leverages the power of conditional MAFs \cite{papamakarios2021normalizing} for the task of CMI estimation.
MAFs fit a distribution $p_\theta(X)$ by mapping samples of a base distribution $p(U)$ with a parametric diffeomorphism $f_\theta$ (i.e., a differentiable, invertible function with a differentiable inverse).
MAFs consider functions $f_\theta$ with triangular Jacobian, which provide simple change of variable formula for $p_\theta(x)$.
Such mappings can be realized by neural networks with masked weight matrices and monotone activations \cite{papamakarios2017masked}.
When $p(U)=\mathsf{Unif}[0,1]^d$, we have
\begin{equation}\label{eq:ml_loss}
    \log p_\theta(x) = \log(d) + \sum_{i=1}^d\log\left| \frac{\partial u_i}{\partial x_i} \right|,
\end{equation}
which serves as a maximum-likelihood loss to optimize $\theta$.
This scheme adapts to conditional distributions $p_\theta(x|z)$ using conditioner models $g_{\theta'}(z)$, such that the compound model is given by
$x_i = f_\theta(u) + g_{\theta'}(z)$, and is termed a conditional diffeomorphism.
For more information on MAFs, we refer the reader to \cite{kobyzev2020normalizing}.

The DMI uses conditional MAFs to estimate $I(X;Y|Z)$ mapping them into Gaussian variables $(X',Y')$, which enables calculation of CMI via a Gaussian approximation.
In the DMI training, we optimize $f_\theta^{-1}$, which learns a mapping of $X$ and $Y$ into a uniform variable $U$, which is then mapped to a Gaussian variables via the inverse Gaussian cumulative distribution function $\Phi^{-1}$.
% The CMI of the resulting jointly Gaussian pair can be easily approximated with sample covariance matrices.
Consequently, the Gaussian CMI is claimed to be equal due to the invariance of CMI to diffeomorphisms:
\begin{proposition}[Conditional mutual information invariance]\label{prop:cmi_inv}
    Let $(X,Y,Z)\sim P_{X,Y,Z}$ and denote by $(X',Y')=(f_\theta(X,Z),g_\phi(Y,Z))$, where $f_\theta$ and $g_\phi$ are conditional diffeomorphisms. Then,
    \begin{equation}
        I(X;Y|Z) = I(X';Y'|Z)
    \end{equation}
\end{proposition}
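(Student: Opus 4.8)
The plan is to reduce the claim to the invariance of ordinary mutual information under componentwise diffeomorphisms, applied separately for each value of the conditioning variable. First I would disintegrate the conditional mutual information as an average over $Z$,
$$
I(X;Y|Z) = \EE_{Z}\!\left[\, I(X;Y|Z=z) \,\right],
$$
where $I(X;Y|Z=z)$ is the mutual information computed under the conditional law $P_{X,Y|Z=z}$, and analogously for $I(X';Y'|Z)$. The key structural observation is that, because $f_\theta$ and $g_\phi$ are \emph{conditional} diffeomorphisms, for every fixed $z$ the maps $x\mapsto f_\theta(x,z)$ and $y\mapsto g_\phi(y,z)$ are genuine diffeomorphisms of the respective spaces. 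Hence, conditioned on $Z=z$, the pair $(X',Y')$ is the image of $(X,Y)$ under the product map $(x,y)\mapsto(f_\theta(x,z),g_\phi(y,z))$, whose Jacobian is block-diagonal.

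Next I would establish the core lemma: for any joint law $P_{X,Y}$ and any diffeomorphisms $\alpha,\beta$, writing $\tilde X=\alpha(X)$ and $\tilde Y=\beta(Y)$, one has $I(\tilde X;\tilde Y)=I(X;Y)$. The cleanest, assumption-light route applies the data processing inequality twice. Since $\tilde X=\alpha(X)$ is a deterministic function of $X$, the chain $\tilde Y-X-\tilde X$ is Markov and DPI gives $I(\tilde X;\tilde Y)\le I(X;\tilde Y)$; since $X=\alpha^{-1}(\tilde X)$ is a deterministic function of $\tilde X$, the reverse inequality holds, so $I(X;\tilde Y)=I(\tilde X;\tilde Y)$. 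Repeating the argument on the second coordinate yields $I(X;Y)=I(\tilde X;\tilde Y)$. Alternatively, when densities exist, the same identity follows from the change-of-variables formula for differential entropy: the Jacobian corrections $\EE[\log|\det J_\alpha|]$ and $\EE[\log|\det J_\beta|]$ each enter $h(\tilde X)$ and $h(\tilde Y)$, and---because the joint Jacobian is block-diagonal---exactly once more in $h(\tilde X,\tilde Y)$, so every Jacobian term cancels in $I(\tilde X;\tilde Y)=h(\tilde X)+h(\tilde Y)-h(\tilde X,\tilde Y)$.

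Applying this lemma with $\alpha=f_\theta(\cdot,z)$ and $\beta=g_\phi(\cdot,z)$ to the conditional law $P_{X,Y|Z=z}$ gives $I(X;Y|Z=z)=I(X';Y'|Z=z)$ for almost every $z$, and averaging over $Z$ recovers $I(X;Y|Z)=I(X';Y'|Z)$, completing the argument. The main obstacle I anticipate is measure-theoretic rather than conceptual: making the per-$z$ disintegration rigorous and ensuring the quantities involved are well defined and measurable in $z$. For this reason I would favor the two-sided DPI formulation, which sidesteps any need for densities or integrable Jacobians, while treating the density-based computation as a sanity check that also explains \emph{why} the Jacobian terms cancel, which is the heart of the invariance.
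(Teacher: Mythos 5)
Your argument is correct, and it is worth noting that the paper itself does not prove this proposition: it simply states that it is a slight modification of Lemma~2 of the cited DMI work, whose derivation (consistent with the normalizing-flow setting) goes through the density/change-of-variables route that you sketch as your ``sanity check,'' namely the cancellation of the block-diagonal Jacobian terms in $h(\tilde X)+h(\tilde Y)-h(\tilde X,\tilde Y)$. Your primary route --- disintegrating over $Z$, observing that a conditional diffeomorphism is a genuine per-$z$ diffeomorphism, and then applying the data processing inequality in both directions --- is a self-contained and strictly more robust proof: it needs neither densities nor finite differential entropies, and in fact only requires the maps to be measurable bijections with measurable inverses for each $z$, so it covers cases the Jacobian computation cannot. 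Two small remarks: you can avoid the per-$z$ disintegration (and the attendant measurability worries you flag) entirely by applying the \emph{conditional} DPI directly, since $X'=f_\theta(X,Z)$ gives $I(X';Y|Z)\le I(X,Z;Y|Z)=I(X;Y|Z)$ and invertibility given $Z$ gives the reverse inequality, then repeating for $Y'$; and in the density version one should at least remark that the Jacobian terms are assumed integrable so that the cancellation is legitimate. Neither point is a gap in your reasoning, only a streamlining.
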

Proposition \ref{prop:cmi_inv} is a slight modification of \cite[Lemma~2]{duong2023diffeomorphic}.
The existence of optimal MAFs is guaranteed by the universal approximation properties of normalizing flows \cite{papamakarios2021normalizing}.
The MAF-based scheme is depicted in Figure \ref{fig:ne_model}.

To estimate $\infomat$, we train a DMI model for the estimation of each entry.
The data is split in a similar fashion to Algorithm \ref{alg:infomat_est}, but due to the parametric nature of the estimator, we optimize using iterative minibatch-gradient descent.
In the training phase we optimize all DMI models in parallel through the optimization of the maximum-likelihood loss \ref{eq:ml_loss} for a fixed number of epochs.
When the training concludes, for each $\infomat_{i,j}$ we feed the corresponding dataset through the optimized MAFs, estimate sample covariance matrices, from which we calculate the MI term.
Neural network optimization is considerably expensive.
However, \cite{duong2023diffeomorphic} show that the DMI outperforms existing CMI estimators in terms of sample requirements.

%%%%%%%%%%%%%%%%%%%%
%%%%%%%%%%%%%%%%%%%%
%%%%%%%%%%%%%%%%%%%%
\begin{figure*}[t!]
    \centering
    \begin{subfigure}[ht]{0.32\linewidth}
        \includegraphics[width=\linewidth]{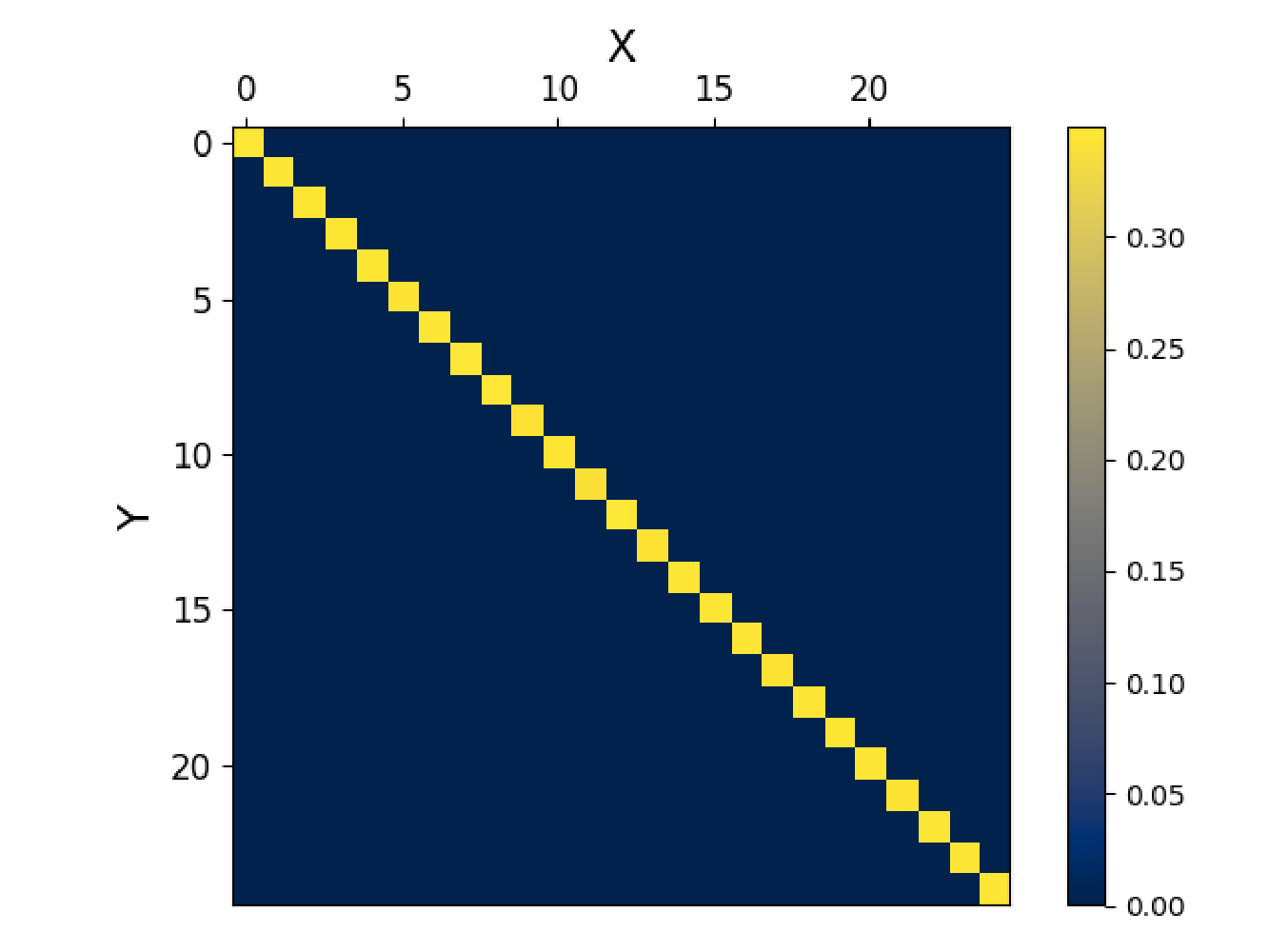}
        \caption{Gaussian i.i.d. data.}
        \label{fig:infomat_g_iid}
    \end{subfigure}
    \hfill
    \begin{subfigure}[ht]{0.32\linewidth}
        \includegraphics[width=\linewidth]{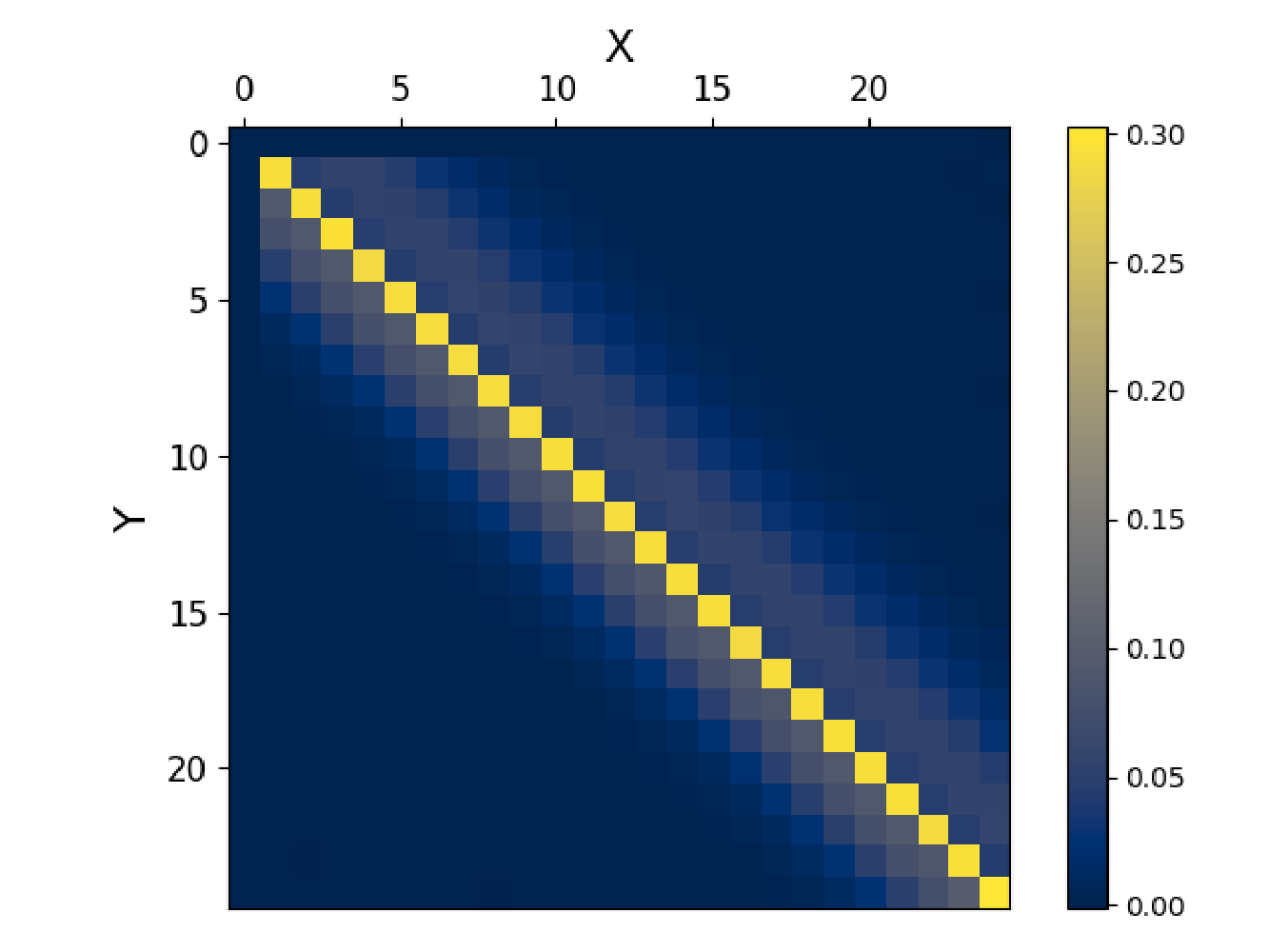}
        \caption{Gaussian ARMA, $\gamma=0.5$}
        \label{fig:infomat_g_arma}
    \end{subfigure}
    \hfill
    \begin{subfigure}[ht]{0.32\linewidth}
        \includegraphics[width=\linewidth]{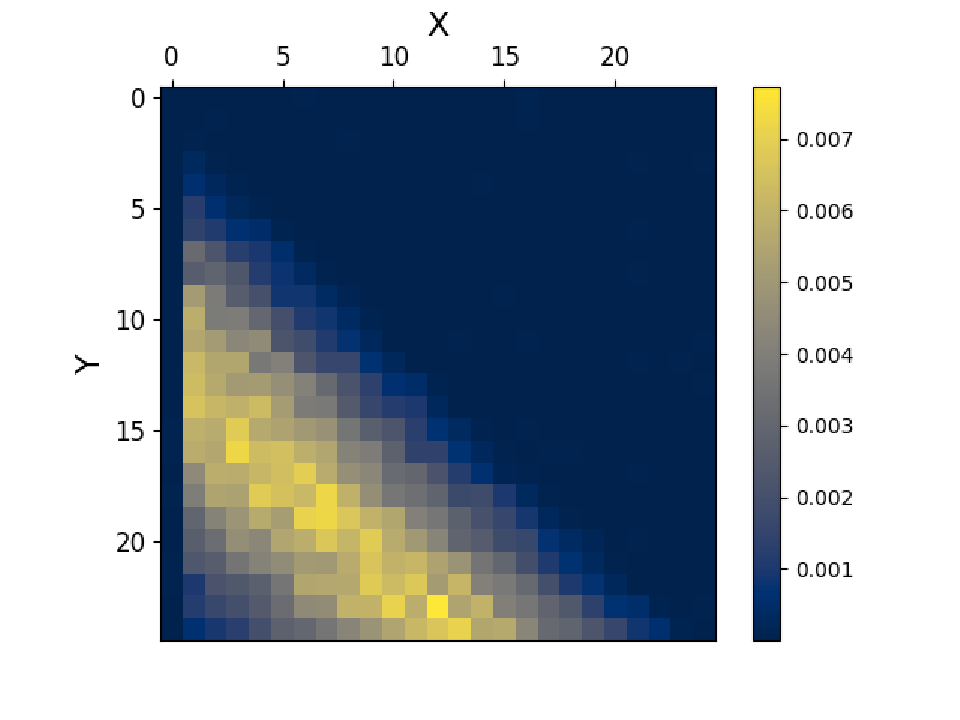}
        \caption{Gaussian ARMA, increasing weights.}
        \label{fig:infomat_g_delayed_arma}
    \end{subfigure}
    \caption{Visual patterns in the InfoMat for several ARMA processes over continuous spaces.}
\end{figure*}
%%%%%%%%%%%%%%%%%%%%
%%%%%%%%%%%%%%%%%%%%
%%%%%%%%%%%%%%%%%%%%

\section{Visualization of Information Transfer}\label{sec:visualization}
In this section we demonstrate the utility of the InfoMat as a visualization tool for sequential data.
By leveraging the connections from Section \ref{sec:infomat} between various parts of $\infomat$ and meaningful information transfer measures, we investigate the evolution of information transfer in given datasets.
We adopt the interpretation that higher directed information in a certain direction implies higher causal effect \cite{jiao2013universal,tsur2023neural}.
However, we provide a finer granularity of observation by looking at the CMI of every time-step.
Leveraging the developed InfoMat estimators, we propose visualizations of $\infomat$ as a heatmap, validating the connection between dependence structures and visual patterns.
An implementation of all settings can be found at \href{https://github.com/DorTsur/infomat}{GitHub}.

\subsection{Continuous Data - Gaussian Approximation}
Consider a Gaussian autoregressive (AR) setting, given by
\begin{align*}
    X_t = \sum_{k=0}^{t-1} \alpha^X_k X_{t-k} + \alpha^Y_k Y_{i-k} + N^X_t\\
    Y_t = \sum_{k=0}^{t-1} \beta^X_k X_{t-k} + \beta^Y_k Y_{t-k} + N^Y_t,\numberthis{}{}\label{eq:gauss_arma}
\end{align*}
where $N^Y_t$ and $N^Y_t$ are samples a of centered i.i.d. Gaussian processes with covariance matrices $\rK_{n_x}$ and $\rK_{n_y}$, respectively.
By controlling the values the AR model weights $(\alpha^X_k,\alpha^Y_k,\beta^X_k,\beta^Y_k)_{k=1}^m$, we induce different dependence structure on $(X^m,Y^m)$, which we then visualize via $\infomat$.
All visualizations in this subsection are obtained via Algorithm \ref{alg:infomat_est} with $n\approx10^5$ samples.

We begin with a simple i.i.d, setting by taking $\beta^X_k=\gamma\in(0,1)$ and nullifying the rest of the parameters. In this case $X_i\indep Y_j$ for $i\neq j$. 
As seen in Figure \ref{fig:infomat_g_iid}, the corresponding InfoMat captures the dependence structure, as we result with a diagonal matrix.
Next, we consider a symmetric dependence structure with memory by taking $\alpha^X_j=\alpha^Y_j=\beta^X_j=\beta^Y_j=\lambda\in(0,0.5)$.
As seen in Figure \ref{fig:infomat_g_arma}, the resulting InfoMat has most of its density on the diagonal, with a decay along the anti-diagonal.
This pattern is expected, as moving along the anti-diagonal from the diagonal is equivalent to measuring dependence w.r.t. further history, which decays due to multiplicative nature of the setting.

Finally, we consider a setting in which the parameter values increase as we look deeper back. In this case, there's a tradeoff between the increase in parameter value and the system's multiplicative decay.
This results in a sleeve of density around the shifted diagonal in which the shift between the two occur.
Additionally, we choose the parameter values such that the information transfer is in the direction $Y\to X$.
The corresponding InfoMat is given in Figure \ref{fig:infomat_g_delayed_arma}, which presents the 'information sleeve' due to the aforementioned tradeoff, and places it in the bottom triangular, as expected from the $Y\to X$ directional effect.
The diagonal on which the InfoMat attains a maxima corresponds to the time-step on which the shift occurs.

\subsection{Neural Estimator} To demonstrate the utility of neural estimation, and compare with the Gaussian approximation, we take a dataset with high nonlinearities.
Specifically, we take an i.i.d. jointly Gaussian dataset with correlation coefficient $\rho=0.9$.
We apply a cyclic shift of $T<m$ to the samples $Y^n$ within each $m$-length sequence, and employ a pair of nonlinear diffeomorphisms $X_i\mapsto \log(X_i)$ and $Y_i\mapsto Y_i^3$.
As visualized in Figure \ref{fig:nonlin_compare}, the neural estimator successfully captures the relations in the dataset, while the Gaussian approximation fails to provide a meaningful visualization.
As the mappings are invertible, the resulting MI is $0.83$ [nats] on nonzero entries, which are approximately the corresponding values in the neural estimator of $\infomat$.
However, this accuracy comes at the cost of training $m^2$ neural nets, which is significantly slower than calculating sample covariance matrices.

\begin{figure}[ht!]
    \hspace{-0.1cm}
    \begin{subfigure}[ht]{0.5\linewidth}
        \includegraphics[trim={30pt 30pt 30pt 1pt}, clip,width=\linewidth]{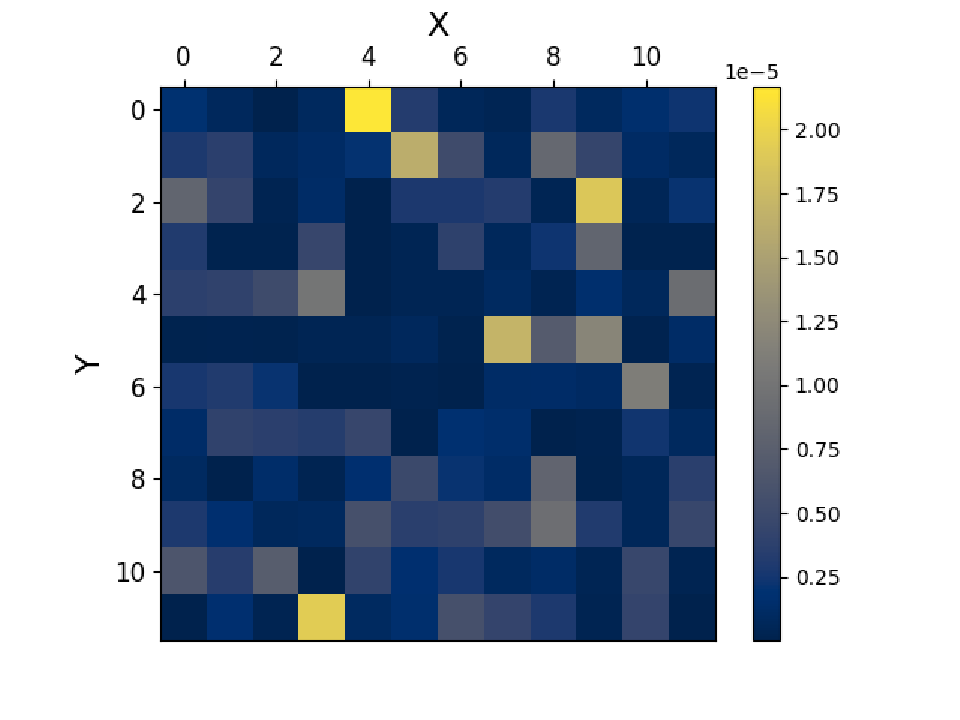}
        \caption{Gaussian approximation.}
        \label{fig:nonlin_gmi}
    \end{subfigure}
    \hspace{-0.3cm}
    \begin{subfigure}[ht]{0.5\linewidth}
        \includegraphics[trim={30pt 30pt 30pt 1pt}, clip,width=\linewidth]{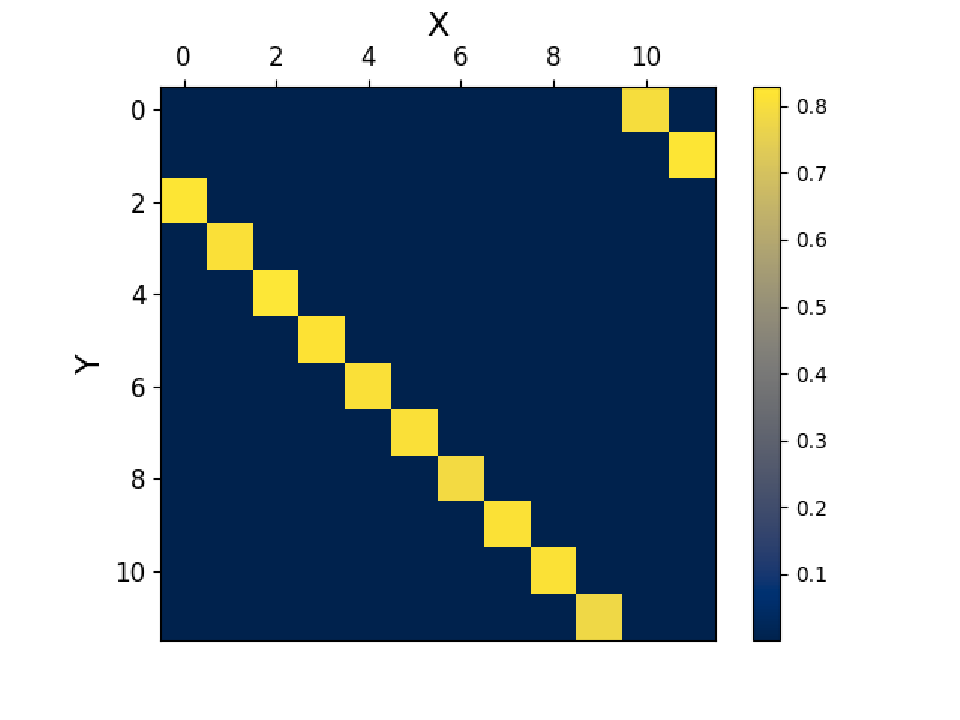}
        \caption{Neural estimator.}
        \label{fig:nonlin_ne}
    \end{subfigure}
    \caption{Estimated InfoMat under nonlinearities and cyclic shift.}
    \label{fig:nonlin_compare}
\end{figure}

\subsection{Visualizing Coding Schemes}
We now show the utility of the InfoMat for visualizing the effect of coding schemes in communication channels.
To present empirical results we consider a standard estimator of CMI, by applying a plug-in entropy estimator to the representation \eqref{eq:cmi_entropies}.
We consider the binary Ising channel with feedback, whose capacity has been analytically solved and a coding scheme was proposed in \cite{elishco2014capacity}.
As feedback capacity achieving distributions seek to maximize the normalized directed information from channel inputs to outputs \cite{tatikonda2008capacity}, we expect a feedback scheme to result with higher directed information.

We estimate the InfoMat in the Ising channel under two coding schemes.
The first is oblivious of any past channel inputs and outputs, sending $X^m\stackrel{i.i.d.}{\sim}\mathsf{Ber}(1/2)$ (Figure \ref{fig:ising_iid}). The second generates $X^m$ according to the feedback capacity achieving scheme \cite{elishco2014capacity} (Figure \ref{fig:ising_opt}). 
We note that the i.i.d. scheme generates an InfoMat with most of its information in the main diagonal and a small residue in the off-diagonal.
The diagonal information follows from the i.i.d. scheme, and it is constant along all time-steps, as it yields a time-invariant distribution. 
The off-diagonal information is due to the Ising state evolution $S_{i+1}=X_i$.
However, the optimal coding scheme generates a non-constant InfoMat.
Specifically, we observe that most of the information is sent along the off diagonal, i.e., most of the information is sent through the effect of past channel inputs.
Additionally, we note that the amount of conveyed information is non-constant. This is a result of the underlying finite-state machine that defines the evolution of $X^m$ according to past inputs, outputs and states \cite[Fig.~5]{elishco2014capacity}.
The estimated normalized directed information is
$$
\frac{1}{m}\hat{I}_{\mathsf{i.i.d.}}(X^m\to Y^m) \approx 0.45,\quad \frac{1}{m}\hat{I}_{\mathsf{opt}}(X^m\to Y^m) \approx 0.56
$$
Finally, we note that in contrast to the i.i.d. scheme, the optimal scheme generates information in the direction $Y\to X$ as well.

\begin{figure}[ht!]
    \hspace{-0.1cm}
    \begin{subfigure}[ht]{0.5\linewidth}
        \includegraphics[trim={30pt 30pt 30pt 1pt}, clip,width=\linewidth]{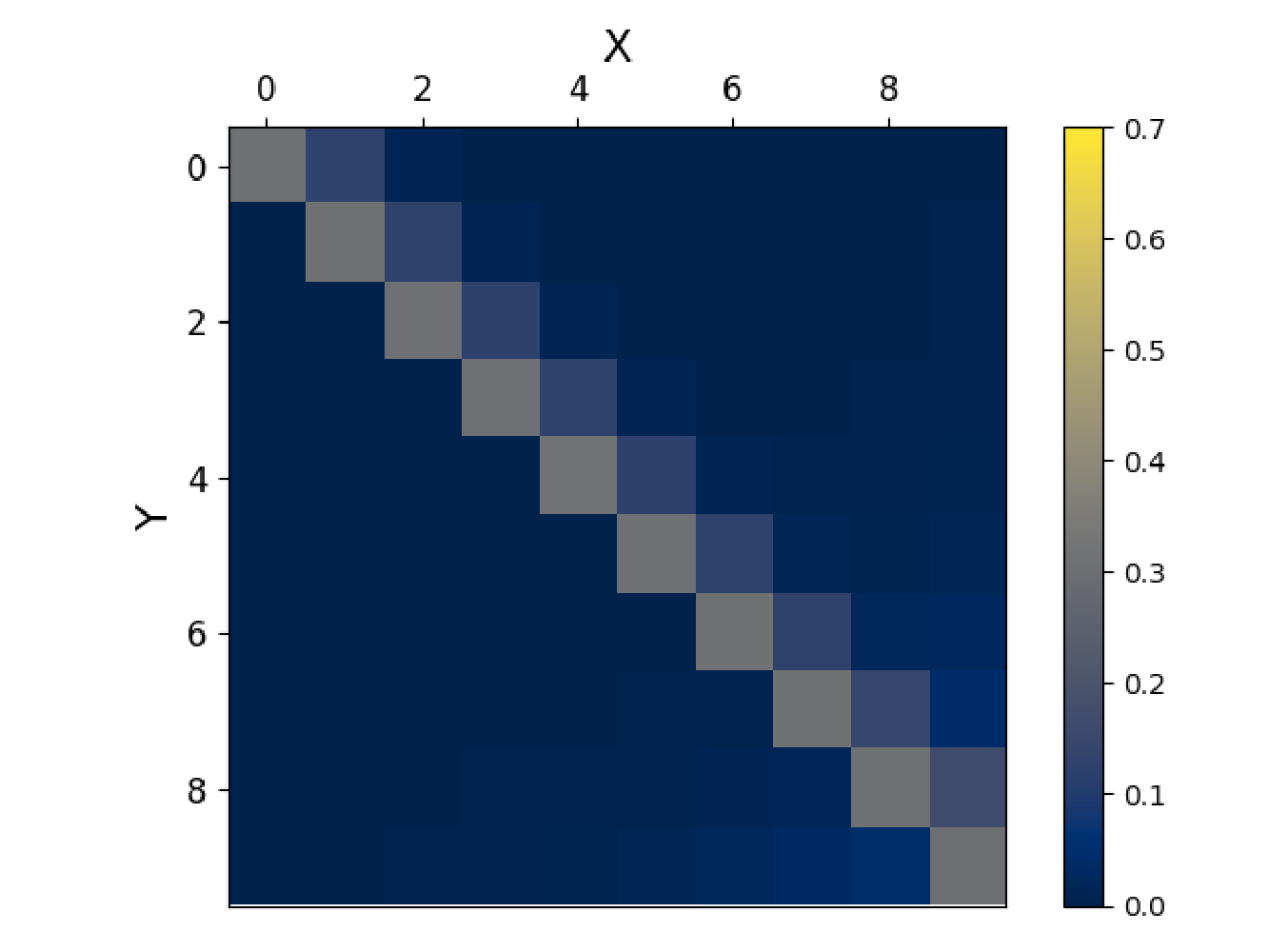}
        \caption{Oblivious coding scheme.}
        \label{fig:ising_iid}
    \end{subfigure}
    \hspace{-0.3cm}
    \begin{subfigure}[ht]{0.5\linewidth}
        \includegraphics[trim={30pt 30pt 30pt 1pt}, clip,width=\linewidth]{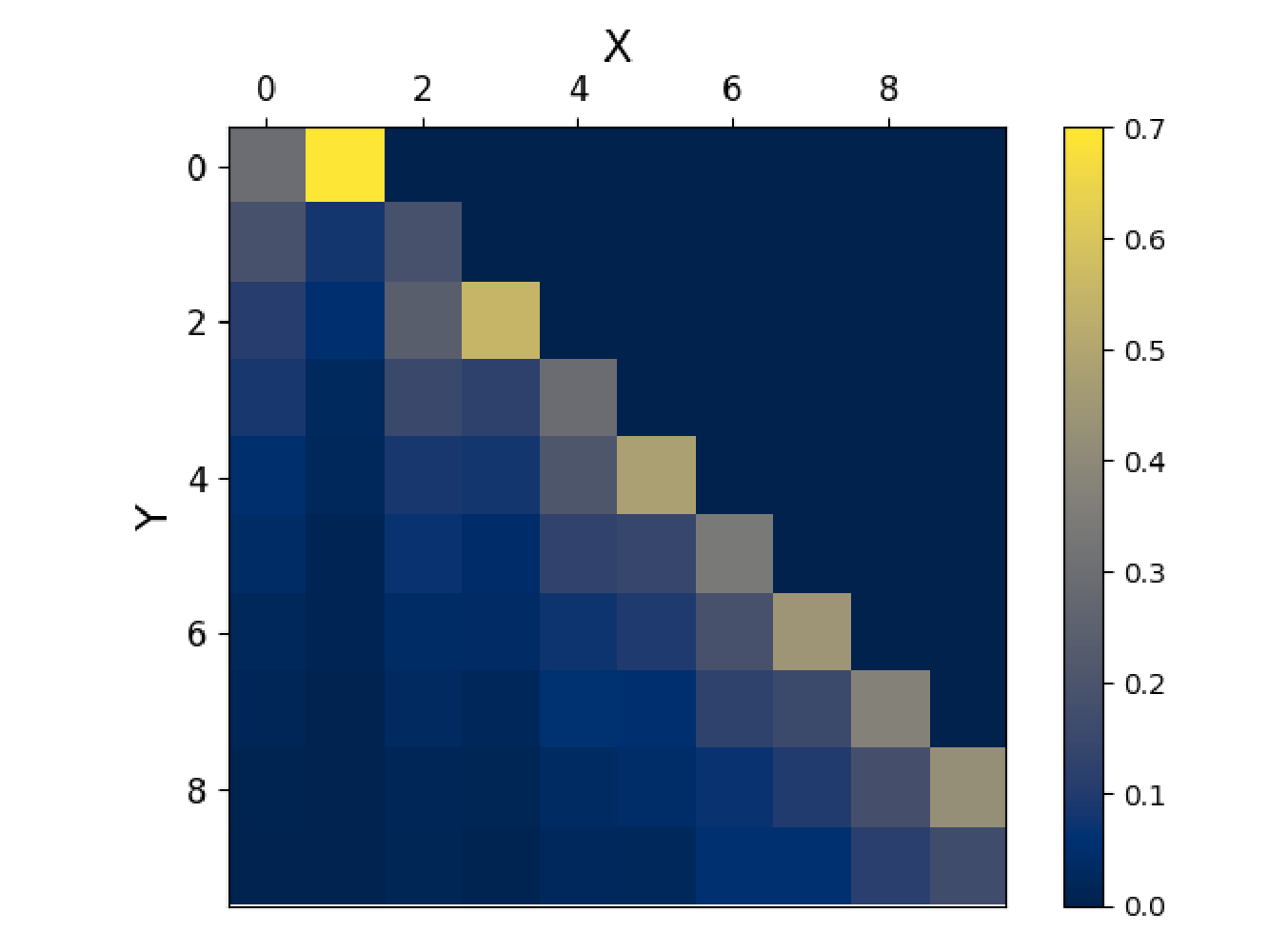}
        \caption{Optimal coding scheme.}
        \label{fig:ising_opt}
    \end{subfigure}
    \caption{Visualization of information transfer in the Ising channel under various coding schemes.}
    \label{fig:__}
\end{figure}

% \begin{figure}[ht]
%     \includegraphics[trim={0pt 0pt 0pt 0pt}, clip,scale=0.35]{Figs/ISIT/gauss_iid_isit.eps}
%     % \vspace{-0.1cm}
%     \caption{InfoMat Visualization.}
%     \label{fig:infomat_g_iid}
% \end{figure}

% \dt{We demonstrate the performance of the DMI estimator by considering a highly nonlinear sequential setting. We take the ARMA case from (before-ref) and apply a nonlinear mapping (fill). The resulting MI remains unchanged, as opposed to the simple linear estimator.}

% \dt{Add Alg.}

% \begin{figure}[t!]
%     \centering
%     \includegraphics[trim={30pt 20pt 1pt 1pt}, clip,width=0.7\linewidth]{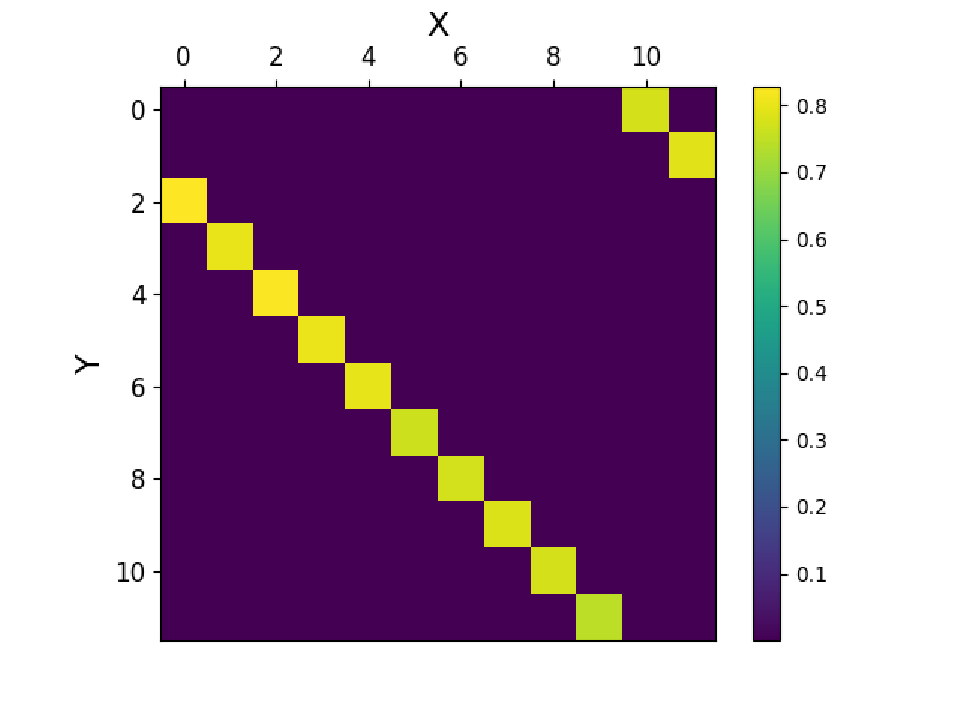}
%     \caption{Neural estimation of shifted data.}
%     \label{fig:gauss_ne}
% \end{figure}

% \dt{
% \begin{enumerate}
%     \item iid data
%     \item shifted iid
%     \item ARMRA - fat
%     \item ARMRA - condensed in one side
% \end{enumerate}
% }

\section{Conclusion}
This work developed the InfoMat, a matrix representation of information exchange.
We showed the InfoMat utility for visual proofs of information decomposition formulas by matrix coloring.
Then, we proposed several estimators of its entries and applied them for the visualization of information transfer in sequential systems.
For future work, we aim to develop a computationally efficient neural estimator of the InfoMat via weight sharing and slicing techniques \cite{tsur2024max}.
% Due to the simplicity of this work, and the popularity of information measures, the InfoMat can serve as a simple data exploration tool in sequential data analysis piplines.
Given this work's simplicity and the popularity of information measures, the InfoMat can serve as an effective tool for data exploration in sequential data analysis pipelines.
Furthermore, we believe that the InfoMat can be highly useful for a myriad of contemporary research fields that involve time-series.
Such fields encompass empowerment \cite{klyubin2005empowerment}, which characterizes robust sequential decision making via information theory, and causal inference \cite{raginsky2011directed}, in which information theory was shown beneficial.

\bibliographystyle{unsrt}
\bibliography{bibliography.bib}

\begin{thebibliography}{10}

\bibitem{massey1990causality}
James Massey et~al.
\newblock Causality, feedback and directed information.
\newblock In {\em Proc. Int. Symp. Inf. Theory Applic.(ISITA-90)}, pages 303--305, 1990.

\bibitem{tatikonda2008capacity}
Sekhar Tatikonda and Sanjoy Mitter.
\newblock The capacity of channels with feedback.
\newblock {\em IEEE Transactions on Information Theory}, 55(1):323--349, 2008.

\bibitem{wibral2014directed}
Michael Wibral, Raul Vicente, and Joseph~T Lizier.
\newblock {\em Directed information measures in neuroscience}, volume 724.
\newblock Springer, 2014.

\bibitem{schreiber2000measuring}
Thomas Schreiber.
\newblock Measuring information transfer.
\newblock {\em Physical review letters}, 85(2):461, 2000.

\bibitem{klyubin2005empowerment}
Alexander~S Klyubin, Daniel Polani, and Chrystopher~L Nehaniv.
\newblock Empowerment: A universal agent-centric measure of control.
\newblock In {\em 2005 ieee congress on evolutionary computation}, volume~1, pages 128--135. IEEE, 2005.

\bibitem{tanaka2017lqg}
Takashi Tanaka, Peyman~Mohajerin Esfahani, and Sanjoy~K Mitter.
\newblock Lqg control with minimum directed information: Semidefinite programming approach.
\newblock {\em IEEE Transactions on Automatic Control}, 63(1):37--52, 2017.

\bibitem{tiomkin2017unified}
Stas Tiomkin and Naftali Tishby.
\newblock A unified bellman equation for causal information and value in markov decision processes.
\newblock {\em arXiv preprint arXiv:1703.01585}, 2017.

\bibitem{sabag2022reducing}
Oron Sabag, Peida Tian, Victoria Kostina, and Babak Hassibi.
\newblock Reducing the lqg cost with minimal communication.
\newblock {\em IEEE Transactions on Automatic Control}, 2022.

\bibitem{raginsky2011directed}
Maxim Raginsky.
\newblock Directed information and pearl's causal calculus.
\newblock In {\em 2011 49th Annual Allerton Conference on Communication, Control, and Computing (Allerton)}, pages 958--965. IEEE, 2011.

\bibitem{wieczorek2019information}
Aleksander Wieczorek and Volker Roth.
\newblock Information theoretic causal effect quantification.
\newblock {\em Entropy}, 21(10):975, 2019.

\bibitem{zhou2016causal}
Yuxun Zhou and Costas~J Spanos.
\newblock Causal meets submodular: Subset selection with directed information.
\newblock {\em Advances in Neural Information Processing Systems}, 29, 2016.

\bibitem{kalajdzievski2022transfer}
Damjan Kalajdzievski, Ximeng Mao, Pascal Fortier-Poisson, Guillaume Lajoie, and Blake Richards.
\newblock Transfer entropy bottleneck: Learning sequence to sequence information transfer.
\newblock {\em arXiv preprint arXiv:2211.16607}, 2022.

\bibitem{bonetti2023causal}
Paolo Bonetti, Alberto~Maria Metelli, and Marcello Restelli.
\newblock Causal feature selection via transfer entropy.
\newblock {\em arXiv preprint arXiv:2310.11059}, 2023.

\bibitem{CovThom06}
T.~M. Cover and J.~A. Thomas.
\newblock {\em Elements of Information Theory}.
\newblock Wiley, New-York, 2nd edition, 2006.

\bibitem{correa2013mutual}
Carlos~D Correa and Peter Lindstrom.
\newblock The mutual information diagram for uncertainty visualization.
\newblock {\em International Journal for Uncertainty Quantification}, 3(3), 2013.

\bibitem{taylor2001summarizing}
Karl~E Taylor.
\newblock Summarizing multiple aspects of model performance in a single diagram.
\newblock {\em Journal of geophysical research: atmospheres}, 106(D7):7183--7192, 2001.

\bibitem{chen2016information}
Min Chen, Miquel Feixas, Ivan Viola, Anton Bardera, Han-Wei Shen, and Mateu Sbert.
\newblock {\em Information theory tools for visualization}.
\newblock CRC Press, 2016.

\bibitem{massey2005conservation}
James~L Massey and Peter~C Massey.
\newblock Conservation of mutual and directed information.
\newblock In {\em Proceedings. International Symposium on Information Theory, 2005. ISIT 2005.}, pages 157--158. IEEE, 2005.

\bibitem{amblard2011directed}
Pierre-Olivier Amblard and Olivier~JJ Michel.
\newblock On directed information theory and granger causality graphs.
\newblock {\em Journal of computational neuroscience}, 30(1):7--16, 2011.

\bibitem{cai2015law}
T~Tony Cai, Tengyuan Liang, and Harrison~H Zhou.
\newblock Law of log determinant of sample covariance matrix and optimal estimation of differential entropy for high-dimensional gaussian distributions.
\newblock {\em Journal of Multivariate Analysis}, 137:161--172, 2015.

\bibitem{duong2023diffeomorphic}
Bao Duong and Thin Nguyen.
\newblock Diffeomorphic information neural estimation.
\newblock In {\em Proceedings of the AAAI Conference on Artificial Intelligence}, volume~37, pages 7468--7475, 2023.

\bibitem{papamakarios2021normalizing}
George Papamakarios, Eric Nalisnick, Danilo~Jimenez Rezende, Shakir Mohamed, and Balaji Lakshminarayanan.
\newblock Normalizing flows for probabilistic modeling and inference.
\newblock {\em The Journal of Machine Learning Research}, 22(1):2617--2680, 2021.

\bibitem{papamakarios2017masked}
George Papamakarios, Theo Pavlakou, and Iain Murray.
\newblock Masked autoregressive flow for density estimation.
\newblock {\em Advances in neural information processing systems}, 30, 2017.

\bibitem{kobyzev2020normalizing}
Ivan Kobyzev, Simon~JD Prince, and Marcus~A Brubaker.
\newblock Normalizing flows: An introduction and review of current methods.
\newblock {\em IEEE transactions on pattern analysis and machine intelligence}, 43(11):3964--3979, 2020.

\bibitem{jiao2013universal}
Jiantao Jiao, Haim~H Permuter, Lei Zhao, Young-Han Kim, and Tsachy Weissman.
\newblock Universal estimation of directed information.
\newblock {\em IEEE Transactions on Information Theory}, 59(10):6220--6242, 2013.

\bibitem{tsur2023neural}
Dor Tsur, Ziv Aharoni, Ziv Goldfeld, and Haim Permuter.
\newblock Neural estimation and optimization of directed information over continuous spaces.
\newblock {\em IEEE Transactions on Information Theory}, 2023.

\bibitem{elishco2014capacity}
Ohad Elishco and Haim Permuter.
\newblock Capacity and coding for the ising channel with feedback.
\newblock {\em IEEE transactions on information theory}, 60(9):5138--5149, 2014.

\bibitem{tsur2024max}
Dor Tsur, Ziv Goldfeld, and Kristjan Greenewald.
\newblock Max-sliced mutual information.
\newblock {\em Advances in Neural Information Processing Systems}, 36, 2024.

\bibitem{goldfeld2022k}
Ziv Goldfeld, Kristjan Greenewald, Theshani Nuradha, and Galen Reeves.
\newblock $ k $-sliced mutual information: A quantitative study of scalability with dimension.
\newblock {\em Advances in Neural Information Processing Systems}, 35:15982--15995, 2022.

\end{thebibliography}

\clearpage
\appendix

\begin{figure*}[!b]
\begin{equation}\label{eq:prop_di_chain_mat}
\begin{pmatrix}
{\color{BlueViolet}\infomat_{1,1}} & {\color{BlueViolet}\infomat_{1,2}} & {\color{BlueViolet}\infomat_{1,3}}  & {\color{BlueViolet}\dots} & {\color{BlueViolet}\infomat_{1,n}}\\
{\color{Cerulean}} & {\color{BlueViolet}\infomat_{2,2}} & {\color{BlueViolet}\infomat_{2,3}}  & {\color{BlueViolet}\ddots}& {\color{BlueViolet}\vdots}\\
{\color{Cerulean}} & {\color{Cerulean}} & {\color{BlueViolet}\infomat_{3,3}}  & {\color{BlueViolet}\ddots}& {\color{BlueViolet}\vdots}\\
% {\color{Cerulean}\infomat_{4,1}} & {\color{Cerulean}\infomat_{4,2}} & {\color{Cerulean}\infomat_{4,3}}  & {\color{BlueViolet}\ddots}& {\color{BlueViolet}\vdots}\\
{\color{Cerulean}} &{\color{Cerulean}} &{\color{Cerulean}} &{\color{BlueViolet}\ddots} & {\color{BlueViolet}\infomat_{n-1,n}}\\
{\color{Cerulean}} &{\color{Cerulean}} &{\color{Cerulean}} & {\color{Cerulean}} & {\color{BlueViolet}\infomat_{n,n}}
\end{pmatrix} 
=
\begin{pmatrix}
{\color{magenta}\infomat_{1,1}} & {\color{violet}\infomat_{1,2}} & {\color{violet}\infomat_{1,3}}  & {\color{violet}\dots} & {\color{violet}\infomat_{1,n}}\\
{\color{Cerulean}} & {\color{magenta}\infomat_{2,2}} & {\color{violet}\infomat_{2,3}}  & {\color{violet}\ddots}& {\color{violet}\vdots}\\
{\color{Cerulean}} & {\color{Cerulean}} & {\color{magenta}\infomat_{3,3}}  & {\color{violet}\ddots}& {\color{violet}\vdots}\\
% {\color{Cerulean}\infomat_{4,1}} & {\color{Cerulean}\infomat_{4,2}} & {\color{Cerulean}\infomat_{4,3}}  & {\color{BlueViolet}\ddots}& {\color{BlueViolet}\vdots}\\
{\color{Cerulean}} &{\color{Cerulean}} &{\color{Cerulean}} &{\color{magenta}\ddots} & {\color{violet}\infomat_{n-1,n}}\\
{\color{Cerulean}} &{\color{Cerulean}} &{\color{Cerulean}} & {\color{Cerulean}} & {\color{magenta}\infomat_{n,n}}
\end{pmatrix}  
\end{equation}
\end{figure*}

\subsection{Proof of Proposition \ref{prop:te_conservation}}
The proof utilizes the observation of Proposition \ref{prop:di_via_te}. We decompose MI, which is given by the entire matrix, into the upper and lower sub-triangulars (excluding the main diagonal), and the main diagonal.
As noted in the main text, $I_{\mathsf{inst}}(X^n,Y^n)$ corresponds to the main diagonal (black), $T_{i+1}^{X\to Y}(i,i)$ corresponds to a sub-column and $T_{i+1}^{Y\to X}(i,i)$ corresponds to a sub-row.
We therefore have
\begin{equation}
        \begin{pmatrix}
{\color{black}\infomat_{1,1}} & {\color{ForestGreen}\infomat_{1,2}} & {\color{Cerulean}\infomat_{1,3}} & {\color{orange}\infomat_{1,4}} & {\color{BlueViolet}\dots} & {\color{brown}\infomat_{1,n}}\\
{\color{red}\infomat_{2,1}} & {\color{black}\infomat_{2,2}} & {\color{Cerulean}\infomat_{2,3}} & {\color{orange}\infomat_{2,4}} & {\color{BlueViolet}\ddots}& {\color{brown}\vdots}\\
{\color{BlueViolet}\infomat_{3,1}} & {\color{BlueViolet}\infomat_{3,2}} & {\color{black}\infomat_{3,3}} & {\color{orange}\infomat_{3,4}} & {\color{BlueViolet}\ddots}& {\color{brown}\vdots}\\
{\color{violet}\infomat_{4,1}} & {\color{violet}\infomat_{4,2}} & {\color{violet}\infomat_{4,3}} & {\color{black}\infomat_{4,4}} & {\color{BlueViolet}\ddots}& {\color{brown}\vdots}\\
{\color{pink}\vdots} &{\color{pink}\ddots} &{\color{pink}\ddots} &{\color{pink}\ddots} &{\color{black}\ddots} & {\color{brown}\infomat_{n-1,n}}\\
{\color{magenta}\infomat_{n,1}} &{\color{magenta}\dots} &{\color{magenta}\dots} &{\color{magenta}\dots}  & {\color{magenta}\infomat_{n,n-1}} & {\color{black}\infomat_{n,n}}
\end{pmatrix}
\end{equation}
\begin{align*}
    &I(X^n;Y^n) \\
    &= ({\color{ForestGreen}T_{2}^{X\to Y}(1,1)} + {\color{Cerulean}T_{3}^{X\to Y}(2,2)} + \dots + {\color{brown}T_{n}^{X\to Y}(n-1,n-1)})\\
    &+ I_{\mathsf{inst}}(X^n,Y^n)\\
    &+ ({\color{red}T_{2}^{Y \to X}(1,1)} + {\color{BlueViolet}T_{3}^{Y\to X}(2,2)} + \dots + {\color{magenta}T_{n}^{Y\to X}(n-1,n-1)} )
\end{align*}

\subsection{Proof of Proposition \ref{prop:di_chain_rule}}
The relation follows from noting that a delayed directed information term $I(D^{k+1}\circ X^n\to Y^n)$ corresponds to a subtriangular element, which forms the one-step reduced directed information element $I(D^{k}\circ X^n\to Y^n)$ when combined with the appropriate sub-diagonal, which, in turn, correspond to a 'delayed' instantaneous MI term. For example, when $k=0$, it is given by \eqref{eq:prop_di_chain_mat},
which we by coloring and elements garthering decomposes as
$$
{\color{BlueViolet}I(X^n\to Y^n)} = {\color{magenta}I_{\mathsf{inst}}(X^n, Y^n)} + {\color{violet}I(\sD^1\circ X^n\to Y^n)}
$$

\subsection{Proof of Proposition \ref{prop:gaussian_estiamtor}}
Let $Z^n$ be a set of $n$ samples from $\cN(0,\rK_Z)$, defined on $\RR^d$.

\textbf{Bias:} According to \cite{cai2015law} the bias of $\log|\rK_Z|$ is given by  
    $$
    \tau_{n,d}:=\sum_{k=1}^d\left( \psi(\frac{n-k+1}{2})\log\frac{n}{2} \right),
    $$
where $\psi(z)$ is the digamma function, which is known to asymptotically behave as $\psi(z)\approx \log(z)-\frac{z}{2}$.
Thus, the bias $\tau_{n,d}$ term behaves as $1/n$, vanishing at $n\to\infty$.

\textbf{Variance:} We have the following:
\begin{align*}
    \mathsf{Var}(\infohatGij) &\leq \mathsf{Var}(\log|\hat{\rK}_{X^{i},Y^{j-1}}|) +  \mathsf{Var}(\log|\hat{\rK}_{X^{i-1},Y^{j}}|)\\
    &+  \mathsf{Var}(\log|\hat{\rK}_{X^{i-1},Y^{j-1}}|) +  \mathsf{Var}(\log|\hat{\rK}_{X^{i},Y^{j}}|).
\end{align*}
Thus, the estimator variance is governed by the variance of the log-determinant estimator.
Following the analysis in the proof of \cite[Lemma~6]{duong2023diffeomorphic}, 
$$
\mathsf{Var}(\log|\rK_Z|)\stackrel{L}{\longrightarrow} O\left(\frac{d_Z}{n}\right)
$$ 
In our case, the dimension of $\rK_{X^i,Y^j}$ is $d_xi+d_yj$, which is upper bounded by $2dm$, and the number of samples for the estimation of $\rK_{X^i,Y^j}$ is $n/\min(i,j)$, which is lower bounded by $n/m$.
As these bounds hold for all four sample covariance matrices, we have
$$
\mathsf{Var}\left(\infohatGij\right) = O\left(\frac{dm}{n/m}\right) = O\left(\frac{dm^2}{n}\right),
$$
which is sharp in $n$.

\subsection{Proof of Lemma \ref{lemma:gaussian_form}}
Let $(X^m,Y^m)\sim P_{X^m,Y^m}$.
    The entropy decomposition of conditional mutual information \eqref{eq:cmi_entropies} follows from the following steps
    \begin{align*}
        &I(X_i;Y_j|X^{i-1},Y^{j-1}) \\
        &= H(X_i|X^{i-1},Y^{j-1}) + H(Y_j|X^{i-1},Y^{j-1}) \\
        &\qquad- H(X_i,Y_j|X^{i-1},Y^{j-1})\\
        &=H(X^{i},Y^{j-1}) - H(X^{i-1},Y^{j-1}) + H(X^{i-1},Y^{j}) \\
        &\qquad- H(X^{i-1},Y^{j-1}) - (H(X^{i},Y^{j}) - H(X^{i-1},Y^{j-1}))\\
        &= H(X^{i},Y^{j-1})+H(X^{i-1},Y^{j}) \\
        &\qquad- H(X^{i-1},Y^{j-1}) - H(X^{i},Y^{j}). 
    \end{align*}
    The formula for the Gaussian case \eqref{eq:cmi_entropies_gaussian} follows by using the definition of multivariate Gaussian entropy \cite{CovThom06}.

\subsection{Analysis of the Gap Between MI and Gaussian MI}
We provide an upper bound on the error of employing the Gaussian Mi term instead of $I(X;Y|Z)$.
We analyse the gap for arbitrary $(i,j)$ and for simplicity of presentation we denote $X_i=X$, $Y_j=Y$ and $(X^{i-1},Y^{j-1})=Z$.
Thus $\infomat_{i,j}=I(X;Y|Z)$.
We denote the with $(X_G,Y_G,Z_G)$ the jointly Gaussian triplet whose first and second moments are similar to those of $(X,Y,Z)$.
Consequently, the Gaussian conditional mutual information term corresponds to $I(X_G;Y_G|Z_G)$.
We utilize the following result from \cite{goldfeld2022k}
\begin{align*}
    &I(X;Y)-I(X_G;Y_G) \\
    &= \DKL(P_{X,Y}\|P_{X_G,Y_G}) - \DKL(P_{X}\otimes P_{Y}\|P_{X_G}\otimes P_{Y_G}).
\end{align*}
We have
\begin{align*}
    &I(X;Y|Z)-I(X_G;Y_G|Z_G)\\
    &\hspace{3cm}\leq \underbrace{I(X;Y|Z)-I(X_G;Y_G|Z)}_{:=\Delta_G} \\
    &\hspace{3cm}+ \underbrace{I(X_G;Y_G|Z) - I(X_G;Y_G|Z_G)}_{:=\Delta_Z}.
\end{align*}
We analyze each error term separately.
For $\Delta_G$ we have
\begin{align*}
    &\Delta_G = \int_{\cZ}\left(I(X;Y|Z=z) - I(X_G;Y_G|Z=z)\right)p_Z(z)\dd z\\
    &= \int_{\cZ}\Big( \DKL(P_{X,Y|Z=z}\|P_{X_G,Y_G|Z=z})\\
    &- \DKL(P_{X|Z=z}\otimes P_{Y|Z=z}\|P_{X_G|Z=z}\otimes P_{Y_G|Z=z}\Big)p_Z(z)\dd z\\
    &= \DKL(P_{X,Y|Z}\|P_{X_G,Y_G|Z}|P_Z)\\
    &-\DKL(P_{X|Z}\otimes P_{Y|Z}\|P_{X_G|Z}\otimes P_{Y_G|Z}|P_Z).
\end{align*}
To bound the second error, we insert an additional term, which considers the KL term conditioned on $Z_G$, while integrated w.r.t $P_{Z}$, given by
$$
\EE_Z\left[\DKL(P_{X_G,Y_G|Z_G}\|P_{Z_G|Z_G}\otimes P_{Y_G|Z_G}|Z_G)\right].
$$
We therefore have
\begin{align*}
    &\Delta_Z  = \int_{\cZ}\DKL(P_{X_G,Y_G|Z=z}\|P_{X_G|Z=z}\otimes P_{Y_G|Z=z})p_z(z)\dd z\\
    &-\int_{\cZ}\DKL(P_{X_G,Y_G|Z_G=z}\|P_{X_G|Z_G=z}\otimes P_{Y_G|Z_g=z})p_{z_G}(z)\dd z\\
    &= \int_{\cZ}\DKL(P_{X_G,Y_G|Z=z}\|P_{X_G|Z=z}\otimes P_{Y_G|Z=z})p_z(z) \\
    &-\int_{\cZ}\DKL(P_{X_G,Y_G|Z_G=z}\|P_{X_G|Z_G=z}\otimes P_{Y_G|Z_G=z})p_z(z)\\
    &+\int_{\cZ}\DKL(P_{X_G,Y_G|Z_G=z}\|P_{X_G|Z_G=z}\otimes P_{Y_G|Z_G=z})p_z(z)\\
    &-\int_{\cZ}\DKL(P_{X_G,Y_G|Z_G=z}\|P_{X_G|Z_G=z}\otimes P_{Y_G|Z_g=z})p_{z_G}(z)\dd z\\
    &= \DKL(P_{X_G,Y_G|Z}\|P_{X_G,Y_G|Z_G}|P_Z)\\
    &+\DKL(P_{X_G|Z}\otimes P_{Y_G|Z}\|P_{X_G|Z_G}\otimes P_{Y_G|Z_G}|P_Z)\\
    &=\int_{\cZ}\DKL(P_{X_G,Y_G|Z_G=z}\|P_{X_G|Z_G=z}\otimes P_{Y_G|Z_g=z})\\
    &\cdot(p_Z(z)-p_{z_G}(z))\dd z,
\end{align*}
where the last term can be upper bounded by
\begin{align*}
    &=\int_{\cZ}\DKL(P_{X_G,Y_G|Z_G=z}\|P_{X_G|Z_G=z}\otimes P_{Y_G|Z_g=z})\\
    &\hspace{3.5cm}\cdot(p_Z(z)-p_{z_G}(z))\dd z\\
    &\leq \max_{z\in\cZ}\DKL(P_{X_G,Y_G|Z_G=z}\|P_{X_G|Z_G=z}\otimes P_{Y_G|Z_g=z})\\
    &\hspace{3.5cm}\cdot d_{\mathsf{TV}}(p_Z,p_{z_G}).
\end{align*}
We note the the resulting upper bound extends the result from \cite{goldfeld2022k}. We also note that as desired $(\Delta_G+\Delta_Z)\to 0$ when 
$P_{X,Y,Z}\to P_{X_G,Y_G,Z_G}$.

% \subsection{Additional Information on MAF-based estimation}
% \begin{enumerate}
%     \item conditioner and transformer
%     \item algorithm
%     \item infomat algorithm
% \end{enumerate}

% \subsection{Additional Information on Plug-in Entropy Estimator}

\end{document}